%% file: main.tex
\documentclass{article}
\usepackage[utf8]{inputenc}

\usepackage{xcolor,amsmath,amsfonts,amsthm,mathrsfs,fancyhdr,verbatim,framed,hyperref,enumerate}
\usepackage[margin=1.in]{geometry}
\usepackage[capitalize]{cleveref}
\usepackage{mathpazo}
\usepackage{bm}
\usepackage{amssymb}
\usepackage{cancel}
\usepackage{qcircuit}
\usepackage{setspace}
\usepackage{braket}
\usepackage{tikz}
\usepackage{etoolbox}
\usepackage[ruled]{algorithm2e}
\usepackage{tikz}
\allowdisplaybreaks

\usepackage{graphicx}
\usepackage{tcolorbox}
\usepackage{tikz}
\usepackage{braket} 

\usetikzlibrary{arrows.meta,positioning,calc}

\newcommand{\eps}{\varepsilon}

\newcommand{\tps}{\tilde{\psi}}

\newcommand{\ip}[2]{\langle #1 | #2 \rangle}

\newcommand{\abs}[1]{\left\vert {#1} \right\vert}

\newcommand{\C}{\mathbb{C}}

\let\Re\relax
\DeclareMathOperator{\Re}{\mathrm{Re}}

\renewcommand{\cal}[1]{\mathcal{#1}}
\DeclareMathOperator*{\E}{\mathbb{E}}

\newcommand{\PSPACE}{\mathsf{PSPACE}}

\newcommand{\MIP}{\mathsf{MIP}}

\DeclareMathOperator{\poly}{poly}

\DeclareMathOperator{\var}{Var}

\newcommand{\YES}{\mathrm{YES}}
\newcommand{\NO}{\mathrm{NO}}

\newtheorem{theorem}{Theorem}
\newtheorem{definition}[theorem]{Definition}

\newtheorem{corollary}[theorem]{Corollary}
\newtheorem{lemma}[theorem]{Lemma}

\newtheorem{claim}[theorem]{Claim}

\newtheorem{problem}[theorem]{Problem}

\newif\ifnotes
\notesfalse          

\ifnotes
\newcommand{\agi}[1]{\textcolor{blue}{(\'{A}gi: #1)}}
\newcommand{\anote}[1]{\textcolor{red}{(Anand: #1)}}
\newcommand{\avishay}[1]{\textcolor{teal}{(Avishay: #1)}}

\else
\newcommand{\agi}[1]{}
\newcommand{\anote}[1]{}
\newcommand{\avishay}[1]{}
\fi

\usepackage{mathrsfs}

\newcommand{\acc}{\mathrm{acc}}
\newcommand{\rej}{\mathrm{rej}}

\newcommand{\BQP}{\mathsf{BQP}}

\newcommand{\PCP}{\mathsf{PCP}}
\newcommand{\adPCP}{\mathsf{adPCP}}

\newcommand{\IP}{\mathsf{IP}}

\newcommand{\coNP}{\mathsf{coNP}}

\newcommand{\PPT}{\mathrm{PPT}}

\def\cH{{\cal H}}

\def\cO{{\cal O}}

\newcommand{\Forr}{\textsc{Forrelation}}

\newif\ifnicematrix
\nicematrixtrue

\ifnicematrix
\newcommand{\nicematrix}[1]{#1}
\else
\newcommand{\nicematrix}[1]{}
\fi

\ifnicematrix
\usepackage{nicematrix}
\NiceMatrixOptions{
code-for-first-row = \color{blue} ,
code-for-last-row = \color{blue} ,
code-for-first-col = \color{blue} ,
code-for-last-col = \color{blue}
}
\fi

\title{A Relativizing $\MIP$ for $\BQP$}

\newif\ifanon\anonfalse
\ifanon
\author{Anonymous}
\date{}
\else
\author{Scott Aaronson\thanks{\texttt{aaronson@cs.utexas.edu}}\\\small{UT Austin} \and Anand Natarajan\thanks{\texttt{anandn@mit.edu}}\\\small{MIT} \and Avishay Tal\thanks{\texttt{atal@berkeley.edu}}\\\small{UC Berkeley} \and \'Agi Vill\'anyi\thanks{\texttt{agivilla@mit.edu}}\\\small{MIT}}
\fi

\begin{document}
\maketitle

\begin{abstract}

Complexity class containments involving interactive proof classes are famously nonrelativizing: although $\IP = \PSPACE$, Fortnow and Sipser showed that that there exists an oracle relative to which $\coNP \not\subseteq \IP$. In contrast, the question of whether the containment $\BQP \subseteq \IP$ is relativizing remains wide open. In this work we make progress towards resolving this question by showing that the containment $\BQP \subseteq \MIP$ holds with respect to any classical oracle. We obtain this result by constructing, for any classical oracle $\cO$, a $\PCP$ proof system for $\BQP^{\cO}$ where the verifier makes polynomially many classical queries to an exponentially-long proof, and to the oracle $\cO$. Our construction is inspired by the state synthesis algorithm of Grover and Rudolph, and serves as a complement to the ``exponential PCP'' constructed by Aharonov, Arad, and Vidick, which achieves similar parameters but which is based on different ideas and does not relativize. We propose relativization as a proxy for prover efficiency, and hope that progress towards an $\IP$ for $\BQP$ in the oracle world will lead to a non-cryptographic interactive protocol for proving any quantum computation to a classical skeptic in the unrelativized world, which is a longstanding open problem in quantum complexity theory.

\end{abstract}
\ifanon
\thispagestyle{empty}
\newpage
\thispagestyle{empty}

\tableofcontents
\newpage

\setcounter{page}{1}
\fi

\section{Introduction}

A major open question in quantum complexity is whether every language in $\BQP$ has an interactive proof system with a polynomial-time classical verifier and a polynomial-time quantum prover (a ``doubly efficient'' interactive proof system); in symbols, whether $\BQP = \IP_{\BQP}$. This question comes from a natural motivation: can a quantum computer certify its correctness to a classical verifier? 
Given an interactive proof system, consider the exponentially large table of all prover responses to all possible challenges by the verifier. With access to such a table, the verifier is able to be convinced by ``spot checking'' the table in only a polynomial number of locations. So such a proof for $\BQP$ would imply that quantum computations have a classical representation (of possibly exponential size) that can be \emph{locally checked}, and whose entries are efficiently computable.

Setting aside the cryptographic argument systems of Mahadev~\cite{mahadev2018classical} and others (in which the soundness of the proof depends on a cryptographic assumption), the best positive partial progress we have towards this problem is the following two results. First, the standard results $\BQP \subseteq \PSPACE$ and $\PSPACE = \IP$ imply that $\BQP \subseteq \IP$. However, the protocol obtained this way is undesirable for several reasons. Firstly, it is not ``doubly efficient'': the honest prover must actually have power far exceeding that of a quantum polynomial-time machine to in order to successfully execute the honest strategy. This is closely related to the second problem: the protocol essentially discards all the ``quantum structure'' of a $\BQP$ computation, by reducing it to an exponentially large sum of amplitudes, and then applies powerful algebraic techniques (namely, the sum-check protocol) to obtain an $\IP$ protocol. 

Second, in the $\MIP^*$ model of \emph{multi-prover} interactive proofs with quantum entangled provers, doubly efficient protocols for $\BQP$ are known, thanks to the early works of McKague~\cite{mckague-2011-interactiveproofsefficientquantum}, Reichardt, Unger, and Vazirani~\cite{reichardt2013classical}, and the later streamlined one-round protocol of Grilo~\cite{grilo2019simple}. In these protocols, each prover only needs to implement a polynomial-time quantum circuit and act on polynomially many shared entangled pairs. However, although the messages sent between the provers and the verifier are classical, it is hard to identify any classical representation of the computation that is being locally checked. Indeed, the protocols rely on the quantum phenomenon of \emph{self-testing}, in which classical correlations between the provers certify that they physically perform certain quantum operations during the protocol. 

In this work, we would like to propose a new axis for measuring progress towards $\BQP = \IP_{\BQP}$: that of \emph{relativization}. Classically, as we have alluded to above, several different types of local checkability are known: the simple Cook-Levin construction for 3SAT, which is doubly efficient, as well as more sophisticated algebraic techniques such as sum-check, which are responsible for $\IP = \PSPACE$ and the $\PCP$ theorem. Fortnow~\cite{fortnow1994role} argued that the latter are distinguished from the former by failing to relativize\footnote{Whether the Cook-Levin construction itself relativizes was disputed by~\cite{AIV92}, but \cite{fortnow1994role} shows that it does under an appropriate oracle access model.}. It is known that the complexity class containment $\PSPACE \subseteq \IP$ fails to hold relative to an oracle: indeed, Fortnow and Sipser showed~\cite{fortnow-1988-AreTI} that there exists an oracle such that $\coNP^{O} \not\subseteq \IP^O$. In the quantum world, while no oracle separating $\BQP$ from $\IP$ is known, it appears that all known $\IP$ and $\MIP^*$ protocols for $\BQP$ fail to relativize. 
    \begin{itemize}
        \item The basic sum-check protocol for $\BQP \subseteq \IP$ fails to relativize for the standard reasons: in the presence of an oracle, it is not possible for the verifier to efficiently compute the finite-field extension of the function implemented by the oracle.
\item An alternative protocol for $\BQP \subseteq \IP$ was proposed by Aharonov, Arad, and Vidick~\cite{aharonov2013guest}. This protocol does less violence to the quantum structure of the $\BQP$ computation, but still fails to relativize. In this protocol, the verifier is asked to convert the quantum circuit to a Hamiltonian $H$ using the Feynman-Kitaev construction, and then to evaluate diagonal entries of the Hamiltonian in a product state basis, i.e., quantities of the form
\[ \bra{\psi_1}\bra{\psi_2} \dots \bra{\psi_n} H \ket{\psi_1} \ket{\psi_2} \dots \ket{\psi_n},\]
for arbitrary single-qubit states $\ket{\psi_1}, \dots, \ket{\psi_n}$. For $H$ obtained from a circuit with oracle gates, these entries cannot be computed by a verifier that only makes polynomially many queries to the oracle: e.g., if one takes each $\ket{\psi_i} = \ket{+}$, then the quantity to be computed depends on the sum of all the entries of the oracle. 
\item The $\MIP^*$ protocols of \cite{mckague-2011-interactiveproofsefficientquantum,reichardt2013classical,grilo2019simple} fail to relativize because they rely on self-tests for local quantum operations, and these are not known to generalize to oracle gates.
    \end{itemize}

The main result of our work is the existence of an $\MIP$ protocol for $\BQP$ that \emph{does} relativize. Note that $\MIP$ is a classical complexity class---the provers are allowed no shared entanglement, unlike $\MIP^*$---and our result can alternately be stated as showing the existence of an \emph{exponentially long classical $\PCP$} for $\BQP$, where the $\PCP$ proof string can be checked with a polynomial number of queries. Somewhat more formally, our results are as follows.
\begin{theorem}[Theorem~\ref{thm:pcp-for-bqp} in the body]\label{thm:pcp-for-bqp_intro}
For any oracle $\cO$, there is an adaptive $\PCP$ system for any promise problem in $\BQP^{\cO}$ with perfect completeness and inverse-exponential soundness, where the proof is an exponentially long classical string, and the verifier is a classical PPT machine with oracle access to $\cO$ that can flip $\poly(n)$ random coins and query $\poly(n)$ locations (adaptively) in the $\PCP$ proof.
    \end{theorem}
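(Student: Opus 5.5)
The plan is to make the proof string a classical description of every intermediate state of the circuit, in the style of Grover–Rudolph state synthesis. Fix a $\BQP^{\cO}$ circuit $C = U_T \cdots U_1$ on $m$ qubits. Each $U_t$ is either a one- or two-qubit gate from a fixed gate set or an oracle gate $\ket{x}\ket{b}\mapsto \ket{x}\ket{b\oplus \cO(x)}$. Let $\ket{\psi_t} = U_t \cdots U_1 \ket{0^m}$.

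For each $t$, the honest proof records the Grover–Rudolph tree of $\ket{\psi_t}$. For every prefix $y\in\{0,1\}^{\le m}$ it stores the marginal weight $p_t(y) = \sum_{z} |\langle yz|\psi_t\rangle|^2$. At each leaf $x\in\{0,1\}^m$ it stores the amplitude $\langle x|\psi_t\rangle$. All numbers are written to $\poly(n)$ bits of precision; alternatively, using an exactly representable gate set such as Hadamard and Toffoli, the amplitudes are exact multiples of $2^{-k/2}$ and completeness can be made perfect. The table has size $T\cdot 2^{O(m)}$. The verifier will check three kinds of local consistency:
\begin{enumerate}[(i)]
\item tree consistency, $p_t(y)=p_t(y0)+p_t(y1)$ and $p_t(x)=|\langle x|\psi_t\rangle|^2$ at the leaves;
\item normalization, $p_t(\emptyset)=1$;
\item gate consistency, $\langle x|\psi_{t}\rangle = \sum_{x'} \langle x|U_t|x'\rangle\langle x'|\psi_{t-1}\rangle$.
\end{enumerate}
For a local gate the sum in (iii) has at most four terms. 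For an oracle gate it is a single term, $\langle x,b|\psi_t\rangle=\langle x,b\oplus\cO(x)|\psi_{t-1}\rangle$, which costs one oracle query. This is exactly where the construction relativizes: the verifier never needs a sum over oracle values. Finally, the initial state is checked at $t=0$, and acceptance is read off as $p_T(\text{output bit}=1)$, which is a node of the tree once the output qubit is ordered first.

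To check consistency, the verifier samples leaves $x$ from the claimed distribution $p_t$ by walking down the tree. This is adaptive sampling: from node $y$ it moves to child $yb$ with probability $p_t(yb)/p_t(y)$, checking (i) at each step. At the sampled leaf it checks (iii) against the step $t-1$ entries. I also need the reverse direction, sampling from $p_{t-1}$ and checking against step $t$, since the unitary preserves norm. The soundness argument is the main obstacle. I would argue by a hybrid over $t$ that if all tests pass with high probability, then the normalized leaf vectors $\tilde\psi_t$ satisfy $\|\tilde\psi_t - U_t\tilde\psi_{t-1}\|\le \delta$.

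The key point is that tree checks force the leaf entries to define a genuine distribution equal to the one being sampled. An error in (iii) on a set of $x$ with total $\tilde\psi$-weight $\epsilon$ is caught with probability about $\epsilon$. The squared-norm difference must be bounded by the weight of violated leaves, which is delicate. A cheater can concentrate errors on tiny-amplitude entries, but those contribute little to the $\ell_2$ distance. Large errors at small-weight entries in $\psi_{t-1}$ that map to large weight in $\psi_t$ are caught by the reverse-direction sampling. So I expect to need a two-sided test and a Cauchy–Schwarz step giving $\|\tilde\psi_t-U_t\tilde\psi_{t-1}\|^2 \lesssim$ rejection probability.

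Summing over $T$ steps, the triangle inequality bounds $\|\tilde\psi_T-\psi_T\|\le T\delta$. So the claimed acceptance weight $p_T(1)$ differs from the true one by $O(T\delta)$. Choosing $\delta = 1/\poly$ below the gap gives constant soundness. First amplify the $\BQP^{\cO}$ circuit so its error is exponentially small. Then repeat the test $\poly(n)$ times independently, so a cheating proof with $\|\cdot\|$-error above $\delta$ survives with probability $e^{-\poly(n)}$. This yields inverse-exponential soundness with $\poly(n)$ coins and $\poly(n)$ adaptive queries.

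Completeness is immediate for the honest table, given exact arithmetic: an exact gate set such as Hadamard and Toffoli, or a verifier that accepts within a rounding tolerance fixed in advance. The MIP corollary then follows by the standard oracularization of a PCP with two non-communicating provers.
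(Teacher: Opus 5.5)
There is a genuine gap, at the step you yourself call delicate: nothing in your checks forces the vector on which test (iii) is evaluated to be normalized. You store the prefix weights $p_t(y)$ and the leaf amplitudes independently, and you check (i) only along the walked path. So any leaf below a node of weight $0$ is never visited, and its stored amplitude (and the weights beneath that node) are unconstrained. Yet such leaves are read as neighbours when (iii) is evaluated at a visited leaf. A cheater can therefore create or destroy weight. Your reverse test does not catch this, because the offending entries have sampling weight exactly $0$ in both directions.

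Concretely, take $U_1=U_2=H$ on qubit $2$ followed by $U_3=\mathrm{CNOT}_{2\to 1}$, accepting iff qubit $1$ reads $1$; the true acceptance probability is $0$. Build the cheating proof as follows.
Step $1$ is honest, $\ket{0}\ket{+}\ket{0^{n-2}}$, except that $p_1(1)=0$ and the never-visited leaves $1\,0^{n-1}$ and $11\,0^{n-2}$ store amplitude $1/\sqrt2$.
Step $2$ has $p_2(0)=0$, $p_2(1)=1$, and visible amplitude $1$ at $1\,0^{n-1}$; the never-visited leaves $0^n$ and $01\,0^{n-2}$ store amplitudes $1$ and $0$.
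Step $3$ is $\ket{1\,0^{n-1}}$.
Every forward test, every reverse test, and every tree and leaf check along walked paths passes with probability $1$. For instance, the forward test at step $2$ gives $(1/\sqrt2+1/\sqrt2)/\sqrt2=1$, and the reverse test at step $2$ gives $(1\pm 0)/\sqrt2=1/\sqrt2$. Leaf checks at neighbour reads also pass if you set the weights under the dead nodes to $|a|^2$. Yet the final state accepts with certainty. So in this format, passing the tests does not imply $\|\tps_t-U_t\tps_{t-1}\|\le\delta$, and the hybrid argument has nothing to start from.

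The missing idea is a non-redundant encoding in which every proof string already is a unit vector. The paper stores only $p_{i,1|w}$ (the verifier uses $1-p_{i,1|w}$ for the $0$-branch) together with a unit phase $\gamma_{i,x}$ per leaf. It computes each amplitude as $\gamma_{i,x}\bigl(\prod_k p_{i,x_k|x_{<k}}\bigr)^{1/2}$. Then $\tps_{i-1}$, and hence $G_i\tps_{i-1}$, is automatically normalized, and the value computed in (iii) is exactly $\bra{x}G_i\ket{\tps_{i-1}}$, with no consistency checks at all. Within your format, you would instead have to verify the entire root-to-leaf path, including the leaf check, for every leaf you read, neighbours included.

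Once both vectors are normalized, a one-sided test suffices and the reverse test is unnecessary. If unit vectors $\psi,\phi$ agree on a set $S$ with $\sum_{x\notin S}|\psi_x|^2=\epsilon$, then $\phi$ also has weight $\epsilon$ off $S$, so $\|\psi-\phi\|^2\le 4\epsilon$. The paper proves an equivalent statement (Corollary~\ref{cor:approx_eq}) using the unbiased estimator $\E_{X\sim\psi}[\phi_X/\psi_X]=\braket{\psi|\phi}$, whose variance is at most $1$. This gives a detectable discrepancy with probability at least $\delta^2/10$ per sample whenever $|1-\bra{\tps_i}G_i\ket{\tps_{i-1}}|>\delta$.

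The rest of your outline matches the paper: one oracle query per oracle-gate check, the hybrid bound over timesteps, and repetition for exponentially small soundness.
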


    We remark that the usual $\PCP$ systems have non-adaptive queries to the proof (e.g., \cite{arora1998probabilistic, Arora-1998-PCP}). Indeed, for query complexity $q = O(1)$ over constant-size alphabet $\Sigma$, one can convert an adaptive $\PCP$ to a non-adaptive $\PCP$ by ``querying all possible paths along the decision tree'',  yielding $|\Sigma|^q = O(1)$ non-adaptive queries. This conversion, though, is meaningless when $q = \poly(n)$ and the proof length is exponential, as it yields a non-adaptive $\PCP$ where the verifier may read the entire proof. For that reason, we do not perform the conversion, but instead convert adaptive $\PCP$ systems directly to $\MIP$ protocols. 
    
    \begin{theorem}[Theorem~\ref{thm:MIP_for_BQP} in the body]
        \label{thm:MIP_for_BQP_intro}
    For any oracle $\cO$:
    \begin{equation}
        \BQP^\cO \subseteq \MIP^\cO
    \end{equation}
    \end{theorem}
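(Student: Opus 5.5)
Given Theorem~\ref{thm:pcp-for-bqp_intro}, the plan is to derive $\BQP^\cO \subseteq \MIP^\cO$ through the standard ``oracle-to-provers'' conversion of Fortnow, Rompel and Sipser, adapted to adaptive queries. Fix a promise problem $L \in \BQP^\cO$ and take the adaptive $\PCP$ verifier $V$ from Theorem~\ref{thm:pcp-for-bqp_intro}. It has perfect completeness and soundness $2^{-\poly(n)}$, uses $r=\poly(n)$ random bits, and makes $q=\poly(n)$ adaptive queries to an exponentially long proof $\pi$, in addition to its $\cO$ queries.

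The $\MIP^\cO$ verifier works with two provers $P_1,P_2$. First it samples $V$'s randomness. It then simulates $V$ in a single round by sending $P_1$ the random string; $P_1$ replies with the full transcript of answers $a_1,\dots,a_q$ that $\pi$ would give along $V$'s adaptive path. Equivalently, the rounds can be made sequential: each query $x_i$ is sent to $P_1$ and $P_1$ answers $a_i$. Because the queries depend on the answers, I will simply run the interaction with $P_1$ adaptively over $q$ rounds. The verifier computes each $x_i$ itself from the randomness, the earlier answers and its own $\cO$ queries, and it checks $V$'s acceptance predicate. Next it picks a uniformly random $i\in[q]$, sends the single query $x_i$ to $P_2$, and rejects unless $P_2$'s answer equals $a_i$. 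Completeness is immediate: both provers answer according to the honest proof $\pi$, so perfect completeness of $V$ carries over.

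Soundness is the main obstacle, because $P_1$ is adaptive and may answer as a function of the whole history, so it need not be consistent with any fixed proof. The key observation is that $P_2$ sees only one query and cannot communicate with $P_1$, so its strategy defines a fixed proof string $\pi_2(x)$; randomized strategies can be fixed to their best deterministic choice. For a given $V$-randomness $R$, let $E_R$ be the event that $P_1$'s answers agree with $\pi_2$ on every query along the induced path. On $E_R$, the run is exactly $V$ running with proof $\pi_2$, so the probability of acceptance together with $E_R$ is at most $s=2^{-\poly(n)}$. Off $E_R$, at least one of the $q$ positions is inconsistent, and the random consistency check catches it with probability at least $1/q$. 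The overall acceptance probability is therefore at most $s + (1-1/q)$, which is $1-\Omega(1/q)$. One subtlety deserves care here: the inconsistent index must be defined along the actual adaptive path (take the first disagreement), and the queries up to that point coincide with those of $V^{\pi_2}$.

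Finally the gap is amplified. Repeating the protocol $O(q\cdot n)$ times, sequentially or in parallel by the parallel repetition results for $\MIP$, or simply sequentially with fresh questions and the same two provers, keeps completeness at $1$ and drives soundness below $1/3$. All computation by the verifier is polynomial time with $\cO$ access, so $L\in\MIP^\cO$. The sequential amplification needs a small argument that the provers' behaviour across rounds can be conditioned round by round; I expect the standard bound $\Pr[\text{all accept}]\le\prod(1-\Omega(1/q))$ to go through because the argument above holds conditioned on any history.
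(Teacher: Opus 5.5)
Your proposal is correct and is essentially the paper's argument. The paper likewise sends the $\PCP$ randomness to $P_1$, simulates the adaptive verifier on $P_1$'s answers, and cross-checks one uniformly random queried location with $P_2$. It then extracts a proof from $P_2$'s strategy and uses the first disagreement to get rejection probability at least $(1-s)/p(n)$ per run (your bound $s+(1-1/q)$ is the same estimate with looser bookkeeping), and amplifies by sequential repetition with fresh randomness. The aside about parallel repetition is unnecessary and not standard for this two-round game, so rely on the sequential version conditioned on the history, as the paper does.
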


The basic idea of our $\PCP$ is to have the $\PCP$ proof contain representations of the quantum state $\ket{\psi_t}$ of the computation at each timestep $t$, where we apply only a single local quantum gate in each timestep. However, crucially, we do not simply write out the components of the state vector $\ket{\psi_t}$: instead, we write down the truth table of a classical state synthesis oracle for $\ket{\psi_t}$. This state synthesis oracle has the property that a \emph{classical} algorithm can use it to \emph{sample} from the distribution obtained by measuring the state in the computational basis, using only a polynomial number of queries. We show that this sampling access enables us to estimate inner products between states, and thus to certify that the provided proof corresponds to a correct computational history.
The entries of the state synthesis oracle are the marginal probabilities of the distribution induced by the states, as well as the phases of the leaves.

We also remark on another striking feature of our result: it illustrates a dimension in which $\BQP$ is much more ``tame'' than the polynomial hierarchy. It is conventionally believed that $\BQP$ is not contained in the polynomial hierarchy, and Raz and Tal~\cite{raz-tal-ph} showed an oracle relative to which this is true. But our result shows that $\BQP$ is always contained in $\MIP$, whereas Fortnow and Sipser's oracle \cite{fortnow-1988-AreTI} can easily be seen to separate $\coNP$ from not just $\IP$, but $\MIP$ as well. 

\section{Future work}
Our result raises several interesting questions for future work. Most directly, we hope it draws attention to the open question of finding an oracle separating $\BQP$ from $\IP$, or proving a relativizing containment \cite{aaronson2021open}. 
Either direction would be interesting: an oracle separation would likely require new lower bound techniques against circuits with  ``expectation'' and ``maximization'' gates, while a relativizing containment would perhaps pave the way towards a doubly-efficient interactive proof protocol for $\BQP$. 
The strongest evidence towards an oracle separating $\BQP$ from $\IP$ was given in \cite{raz-tal-ph}, which implicitly shows that for the Forrelation problem, there exists no interactive proof with a constant number of rounds. Indeed, this is due to the fact that such interactive proofs imply algorithms in the polynomial hierarchy with a constant number of alternations. Furthermore, a close look at the tradeoff in \cite[Theorem~7.4]{raz-tal-ph} rules out a fixed polynomial number of rounds (but not an arbitrarily polynomial number of rounds). 

Another natural follow-up direction is to reduce the computational power of the prover in our protocol. We know from \cite{reichardt2013classical, grilo2019simple} that there exists an efficient-prover $\MIP^*$ for $\BQP$. But what happens in the classical $\MIP$ setting? Does there exist an $\MIP$ for $\BQP$, where the provers require only the power of $\BQP$ to establish completeness? Moreover, what is the relativizing behavior of such protocols?

\section{Technical Overview}

In this section, we explain our proof techniques by writing out the exponential $\PCP$ protocol explicitly for the $2$-fold $\Forr$ problem from \cite{aa2018forrelation}. This problem was shown in~\cite{aa2018forrelation} to require exponentially many classical queries to solve, and is a special case of the $k$-fold $\Forr$ which is $\BQP$-complete, making it a good simplified setting to explain the main ideas of our protocol.

Recall that in $2$-fold $\Forr$, we are given oracle access to boolean functions $f_1, f_2 : \{0,1\}^n  \rightarrow \{0,1\}$. The \emph{forrelator} of $f_1, f_2$ is the quantity:  

\begin{equation}
   \Phi_{f_1, f_2} = \frac{1}{2^{3n/2}} \sum_{x, y \in \{0,1\}^n} (-1)^{f_1(x)} \cdot (-1)^{x \cdot y} \cdot (-1)^{f_2(y)}
\end{equation}

The goal of $2$-fold $\Forr$ is to decide whether $\Phi_{f_1, f_2} \geq 0.6$ or $|\Phi_{f_1, f_2}| \leq 0.01$, given the promise that one is the case. This problem can be solved quantumly by the simple circuit in Figure~\ref{fig:two-fold-forrelation}. We will now describe our $\PCP$  system for this problem, which is based on this quantum circuit. 






\input{figures/forrelation-circuit}
\input{figures/forrelation-circuit-sequential}
This circuit (described in more detail in Appendix~\ref{app:forr}), consists solely of Hadamard gates and oracle gates for the functions $f_1, f_2$, which we denote by $\cO_{f_1}, \cO_{f_2}$ respectively, with a total of $m = \poly(n)$ gates applied. The circuit \emph{accepts} if the final measured output is equal to $0^n$, and \emph{rejects} otherwise\footnote{Note that this is slightly different from our convention for general circuits, which will be to have acceptance encoded by a single bit of the output.}.

\paragraph{The structure of the proof.}

In our $\PCP$ system, the prover will be expected to send an exponentially long proof string $\pi$. This proof will consist of segments indexed by $i \in \{0, 1, \dots, m\}$, each of which consists of a classical description (in a sense to be described later) of a quantum state $\ket{\tps_i}$. An honest prover will choose $\ket{\tps_i}$ to be equal to the state of the quantum circuit in Figure~\ref{fig:two-fold-forrelation-sequential} immediately after the $i$th gate has been applied (with $\ket{\tps_0}$ being the initial state of the computation). The verifier will execute tests on the proof that query $\poly(n)$ bits and check that the sequence of states in the proof corresponds to a valid computational history ending in an accepting configuration (in which the accepting output $0^n$ has probability at least $2/3$).

The na\"{i}ve representation of quantum states as a list of coefficients in the state vector does not suffice for these tests. Instead, we use a different representation, based on the state-synthesis algorithm of~\cite{aaronson2016complexityquantumstatestransformations, grover2002creatingsuperpositionscorrespondefficiently}. For a state
\[ \ket{\tps} = \sum_{x \in \{0,1\}^n} \tilde{\alpha}_x \ket{x},\]
observe that we can write each coefficient $\tilde{\alpha}_x$ as a product

\[ \tilde{\alpha}_x = \gamma_x \cdot \sqrt{ p_{x_1}\cdot  p_{x_2|x_1} \cdot p_{x_3|x_1 x_2}  \cdots p_{x_n |x_1 \dots x_{n-1}}},\]
where $\gamma_x$ is a complex phase (a unit-modulus complex number), and the quantities $p_{x_k | x_{1} \dots x_{k-1}}$ are conditional probabilities. The representation of $\ket{\tps}$ that we will use in our proof is as a table $(\gamma, p)$ of complex phases and conditional probabilities, represented as binary numbers up to $\poly(n)$ bits of precision. This can be visualized as a prefix-tree, as depicted in Figure~\ref{fig:aaronson-prefix-tree2}.  This representation has the following nice properties:

\input{figures/prefix-tree2}

\begin{itemize}
    \item \textbf{Validity:} By suitably omitting redundant probabilities, \emph{every} proof string can be interpreted as a valid, normalized quantum state.
    \item \textbf{Sampling access:} Using $O(n)$ adaptive queries to the proof, the verifier can sample a string $x$ with probability $|\tilde{\alpha}_x|^2$.
    \item \textbf{Amplitude access:} Using $O(n)$ queries to the proof, for any given $x$, the verifier can compute $\tilde{\alpha}_x$.
\end{itemize}

\paragraph{The verifier's tests.}
Given a proof $\pi$ specifying a state sequence $\{\ket{\tps_i}\}_{0 \leq i \leq m}$, the verifier will execute tests that check that (1) the claimed initial state $\ket{\tps_0}$ matches the desired state $\ket{0^n}$, (2) each pair of states $\ket{\tps_{i-1}}$ and $\ket{\tps_{i}}$ are correctly related by the application of the $i$th gate $G_i$, and (3) the final state gives a high probability to the desired outcome $0^n$. Tests (1) and (3) are fairly straightforward, so in this overview, we will focus on test (2).

The main idea behind our test is the following fact: let $\ket{\psi} = \sum_x \psi_x \ket{x} $ and $\ket{\phi} = \sum_x \phi_x \ket{x}$ be states, and let $X \sim \psi$ denote a random string $X \in \{0,1\}^n$ sampled by measuring $\ket{\psi}$ in the computational basis. Then if $\ket{\psi}$ and $\ket{\phi}$ are $\delta$-far apart (in that $|1 - \braket{\psi| \phi}| > \delta$), then sampling $X \sim \psi$, it holds with probability $\Omega(\delta^2)$ that the coefficients $\psi_X$ and $\phi_X$ of the two states at $X$ are detectably different at $O(n)$ bits of precision (i.e. $|\phi_x - \psi_X| \geq 2^{-O(n)}$). Thus, if we are given $\ket{\psi}$ and $\ket{\phi}$ in our description format, we can check that they are close by the following procedure.
\begin{itemize}
    \item Repeatedly sample $X \sim \psi$ using the sampling access to $\ket{\psi}$.
    \item For each sampled value $X$, compute $\psi_X$ and $\phi_X$ using amplitude access. Reject if they are not equal.
\end{itemize}

This basic test certifies that two states are nearly equal. But what we want in our protocol is to check that the states of two adjacent timesteps are related by the appropriate gate:
\[ \ket{\tps_i} \approx G_i \ket{\tps_{i-1}}.\]

We obtain this using a simple variant of the protocol above. Observe that each gate $G_i$ is either a local gate, acting on at most $2$ qubits of the state, or an oracle gate. If $G_i$ is a local gate, every coefficient of the state $G_i \ket{\tps_{i-1}}$ can be computed by reading at most 4 coefficients of the state $\ket{\tps_{i-1}}$. On the other hand, if $G_i$ is an oracle gate, then because oracle gates are diagonal, each coefficient in $G_i \ket{\tps_{i-1}}$ can be computed from just \emph{one} coefficient in $\ket{\tps_{i-1}}$ together with one query to the oracle:
\[ \bra{x} \cO_f \ket{\tps_{i-1}} = (-1)^{f(x)} \braket{x | \tps_{i-1}}. \]

Thus, our test is the following:
\begin{itemize}
    \item Repeatedly sample $X \sim \tps_{i}$ using sampling access to $\ket{\tps_{i}}$.
    \item For each sampled value $X$, compute the $X$-component of $\ket{\tps_i}$ using amplitude access. Compute the $X$-component of $G_i \ket{\tps_{i-1}}$ by performing up to four amplitude access queries to $\ket{\tps_{i-1}}$. Reject if these coefficients differ.
\end{itemize}

For a suitable polynomial number of samples, one can show that this test rejects with probability exponentially close to $1$ if $\ket{\tps_i}$ and $G_i\ket{\tps_{i-1}}$ are $1/\poly(n)$ far from each other.

\paragraph{Putting together the $\PCP$.}

Putting the pieces together, the verifier of our $\PCP$ proof system performs the following operations. Given a proof $\pi$ consisting of a description of a sequence of states $\ket{\tps_0}, \ket{\tps_1}, \dots, \ket{\tps_m}$, the verifier first checks that $\ket{\tps_0}$ is close to the desired initial state $\ket{0^n}$, using polynomially many queries to $\pi$. Next, for each $i \in \{1, \dots, m\}$, the verifier checks that $\ket{\tps_{i}} \approx G_i \ket{\tps_{i-1}}$, where $G_i$ is the $i$th gate in the circuit, using the test described above: this requires $\poly(n)$ adaptive accesses to the proof $\pi$, as well as possibly $\poly(n)$ queries to $\cO$. Finally, the verifier checks that the final state $\ket{\tps_{m}}$ assigns a sufficiently high probability to the desired outcome---in this case, $0^n$. This again requires $\poly(n)$ queries to $\pi$. This yields a complete and sound proof system for 2-fold $\Forr$. \anote{What we ended up writing was completely generic}


\input{figures/mip-colored}

\paragraph{Obtaining an $\MIP$ with an ``adaptive'' clause-variable transformation.}
To convert this $\PCP$ to an $\MIP$ proof system, we use a variant of the standard clause-variable transformation. Typically, this transformation works as follows: given a system of local constraints to be checked on some proof string $\pi$, we define a two-prover, one-round $\MIP$ protocol where the verifier samples a a single constraint (a ``clause'') uniformly at random from the system, and sends the label of this constraint to the first prover, and the label of a single  location in $\pi$ touched by the constraint (a ``variable'') to the second prover. The verifier expects the first prover to respond with an assignment to all the locations in the proof touched by the constraint, and the second prover to respond with an assignment to the single requested location. It accepts if the first prover's assignment satisfies the constraint, and is consistent with the second prover's assignment.

In our case, there is a subtlety caused by the \emph{adaptivity} of the $\PCP$ verifier. This means that, even for a fixed setting of the $\PCP$ verifier's random coins, we cannot predict which locations the verifier will read in $\pi$ without knowing $\pi$ itself, since each location read by the verifier is a function of its random coins as well as the values in the previously read locations of $\pi$. To handle this, we introduce a two-round variant of the clause-variable transformation, depicted in Figure~\ref{fig:mip-colored}. In this version, the $\MIP$ verifier first samples a setting $R$ of the $\PCP$ verifier's random coins, and sends this to the first prover $P_1$. $P_1$ responds with an assignment $(a_1, \dots, a_{q_\pi})$, ostensibly corresponding to the values in the locations of $\pi$ that the $\PCP$ verifier would have queried on randomness $R$. Now, by simulating the $\PCP$ verifier on $R$ and the values $a_1, \dots, a_{q_\pi}$, the $\MIP$ verifier can determine which locations $i_1, \dots, i_{q_\pi}$ the $\PCP$ verifier would have queried. In the next round of the $\MIP$, the verifier picks a random location $i_j$ out of $i_1, \dots, i_{q_\pi}$ and asks the second prover $P_2$ for an assignment to this location. The $\MIP$ verifier accepts if $P_2$'s assignment is consistent with $P_1$'s assignment on the location $i_j$, and $P_1$'s assignment leads the simulated $\PCP$ to accept on $R$.

The analysis of this version of the clause-variable transformation is quite similar to that of the standard version: one extracts a proof string $\pi$ from $P_2$'s responses to all possible locations $i$, and argues that $\pi$ must have been accepted by the $\PCP$ verifier if $P_1, P_2$ were accepted by the $\MIP$ verifier. 

\section{Preliminaries}
\subsection{Quantum states}

\begin{definition}[State sampling] For any quantum state $\ket{\psi}$:

\begin{equation}
    \ket{\psi} = \sum_{x \in \{0,1\}^n} \alpha_x \ket{x},
\end{equation}

we denote by $X \sim \psi$ the distribution formed by sampling $X$ with probability $|\alpha_X|^2$. 
    
\end{definition}

\subsection{Proof Systems}
\begin{definition}[Quantum Queries to Boolean functions]

For any boolean function $f : \{0,1\}^* \rightarrow \{0,1\}$, let $\cO_f$ be the corresponding classical oracle. Quantum queries to $\cO_f$ perform the following unitary transformation: 

\begin{equation*}
    \cO_f \ket{x} = (-1)^{f(x)} \ket{x}
\end{equation*}
    
\end{definition}

We note that any quantum algorithm that makes controlled queries to a classical oracle $\cO_f$ can be simulated by one that makes uncontrolled queries to a classical oracle $\cO_g$ for the Boolean function $g$ that is the classically controlled version $f$. So in our definition of $\BQP^{\cO}$, we will only allow the quantum algorithm to make uncontrolled queries to $\cO$.

\begin{definition}[Oracle $\BQP$] Let $\cO$ be any classical oracle. We say a quantum circuit $C_n$ \emph{accepts} an input $x$ if measuring the first qubit of the output state yields $0$. A promise problem $L = (L_\YES, L_\NO)$ is in $\BQP^{{\cO}}$ if and only if there exists a uniform family of quantum circuits $\{C_n\}$ such that, for every $x \in \{0,1\}^n$:

\begin{itemize}
    \item if $x \in L_\YES$, then $\Pr[\acc \leftarrow C_n(x)] \geq 2/3$.
    \item if $x \in L_\NO$, then $\Pr[\acc \leftarrow C_n(x)] \leq 1/3$
\end{itemize}
    
\end{definition}


\begin{definition}[Oracle $\MIP$] Let $\cO$ be any classical oracle. A language $L^\cO$ has an \emph{oracle multi-prover interactive proof} ($\MIP^\cO$) if there exists an interaction between a $\PPT$ algorithm $V^\cO$ and $k$ (potentially computationally unbounded) non-communicating provers $(P^\cO_1, \dots, P^\cO_k)$ such that, for all $x \in \{0,1\}^n$:

\begin{enumerate}
    \item Efficiency: $V$ takes as input $x \in \{0,1\}^n$ and outputs $\acc$ or $\rej$, using at most $\poly(n)$ random coins and making at most $\poly(n)$ combined queries to the $k$ provers and the oracle $\cO$. We denote by $V^\cO(x)$ the random variable corresponding to the output of $V$ on input $x$, with oracle access to $\cO$.  
    \item Completeness: If $x \in L^\cO$: 

    \begin{equation*}
        \Pr\left[\acc \leftarrow \left[V^\cO(x), (P_1^\cO, \dots, P_k^\cO)\right]\right] \geq 2/3.
    \end{equation*}

    \item Soundness: If $x \notin L^\cO$, then for all non-communicating provers $(\hat{P}_1, \dots \hat{P}_k)$: 

    \begin{equation*}
        \Pr\left[\acc \leftarrow \left[V^\cO(x), (\hat{P}_1^\cO, \dots, \hat{P}_k^\cO)\right]\right] \leq 1/3.
    \end{equation*}
\end{enumerate}
    
\end{definition}

\begin{definition}[Oracle Adaptive $\PCP$s] Let $\cO$ be any classical oracle. An oracle language $L^\cO$ has an \emph{oracle adaptive probabilistically checkable proof} with soundness $s$ and completeness $c$ ($\adPCP^\cO_{s, c}[p(n), q(n), r(n)]$) if there exists a $\PPT$ algorithm $V$ satisfying the following properties for all $x \in \{0,1\}^n$:

\begin{enumerate}
    \item \emph{Efficiency:} Given random access to a string $\pi \in \{0,1\}^*$ and an oracle $\cO$, $V$ takes as input $x \in \{0,1\}^n$ and outputs $\acc$ or $\rej$, using at most $r(n)$ random coins, making at most $p(n)$ adaptive queries to locations of $\pi$, and at most $q(n)$ adaptive queries to $\cO$. We denote by $V^{\pi, \cO}(x)$ the random variable corresponding to the output of $V$ on input $x$ with random access to $\pi$ and oracle access to $\cO$.
    \item \emph{Completeness:} If $x \in L^\cO$, there exists a proof $\pi \in \{0,1\}^*$ such that:
    \begin{equation*}
        \Pr[V^{\pi, \cO}(x) = \acc] \geq c.
    \end{equation*}
    \item \emph{Soundness:} If $x \notin L^\cO$, then for all $\pi \in \{0,1\}^*$:
    \begin{equation*}
        \Pr[V^{\pi, \cO}(x) = \acc] \leq s.
    \end{equation*}
\end{enumerate}
\end{definition}

\subsection{Forrelation}

\begin{definition}[Forrelator]
    Fix any two boolean functions $f_1, f_2 : \{0,1\} \rightarrow \{0, 1\}$.  The \emph{forrelator} of $f_1$ and $f_2$ is the quantity:

\begin{equation*}
  \Phi_{f_1, f_2}
  = \frac{1}{2^{3n/2}}
     \sum_{x,y\in\{0,1\}^n} (-1)^{f_1(x)}(-1)^{x\cdot y}(-1)^{f_2(y)}.
  \label{eq:forr-phi}
\end{equation*}
\end{definition}

We consider the usual promise version of the Forrelation problem from \cite{aa2018forrelation}.

\begin{problem}[Forrelation]\label{pr:forrelation}
Fix any two boolean functions $f_1, f_2 : \{0,1\}^* \rightarrow \{0, 1\}$. Given query access to $f_1$ and $f_2$ and the promise that the forrelator is
either $\Phi_{f_1, f_2} \ge 0.6$, or $\abs{\Phi_{f_1, f_2}} \le 0.01$,
decide which is the case.
\end{problem}

\section{A Relativizing Exponential $\PCP$ for $\BQP$}
In this section, we prove our main theorem. 

\begin{theorem}\label{thm:pcp-for-bqp}
For any oracle $\cO$:
\[\BQP^\cO \subseteq \adPCP^{\cO}_{s, c}[\poly(n), \poly(n),\poly(n)],\]

where $s = 2^{-\poly(n)}$ and $c = 1$.
\end{theorem}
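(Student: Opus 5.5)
The plan is to fix a convenient universal gate set and have the proof be a sequence of exact prefix-tree encodings of the honest intermediate states. The verifier checks the initial state exactly, checks each gate transition by the sampling test of the technical overview, and checks acceptance exactly at the root of the final tree. Concretely, I will first convert the $\BQP^\cO$ circuit $C_n$ into $m=\poly(n)$ gates from $\{H,\ \text{Toffoli},\ \cO\}$, where $\cO$ denotes the uncontrolled phase oracle of the preliminaries. This keeps the acceptance gap, say at $\ge 2/3$ versus $\le 1/3$.

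\textbf{Encoding and completeness.} I commit to this gate set because then every amplitude of every intermediate state $\ket{\psi_i}=G_i\cdots G_1\ket{0^N}$ lies in $\mathbb{Z}[1/\sqrt2]$ with denominator at most $2^{m/2}$. Hence every conditional probability $p_{x_k\mid x_1\dots x_{k-1}}$ is a ratio of elements of $\mathbb{Q}(\sqrt2)$ whose numerators and denominators have $\poly(n)$ bits. Every phase $\gamma_x$ is $\pm1$, or an arbitrary default value when $\alpha_x=0$. So the honest prover can write each table entry \emph{exactly} as a pair of $\poly(n)$-bit rationals in the basis $\{1,\sqrt2\}$, and the verifier's amplitude-access computations become exact symbolic arithmetic in $\mathbb{Q}(\sqrt2)$. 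This exactness is what buys perfect completeness.

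To handle a cheating proof, the verifier reads each node's pair $(p_0,p_1)$ and renormalizes it, using $(1/2,1/2)$ if both are zero or the encoding is malformed. Thus every string $\pi$ defines genuine unit vectors $\ket{\tps_0},\dots,\ket{\tps_m}$ with $O(N)$-query sampling and amplitude access. The three tests are as follows.
\begin{enumerate}
\item Test (1) checks that the root-to-$0^N$ path has all conditional probabilities equal to $1$.
\item Test (2) is run for each $i$, repeated $k=\poly(n)$ times. It samples $X\sim\tps_i$ and rejects unless $\braket{X|\tps_i}=\bra{X}G_i\ket{\tps_{i-1}}$ exactly. The right-hand side needs at most $8$ amplitude reads of $\ket{\tps_{i-1}}$ for a Toffoli permutation or Hadamard, or one read plus one query to $\cO$ for an oracle gate.
\item Test (3) accepts only if the root probability $p_{x_1=0}$ of $\ket{\tps_m}$ is at least $2/3$, which is again an exact comparison.
\end{enumerate}
The honest proof passes all tests with probability $1$.

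\textbf{Soundness.} The key quantitative step is this. Let $D_i$ be the set of $x$ where $\braket{x|\tps_i}\ne\bra{x}G_i\ket{\tps_{i-1}}$, and let $a_i=\sum_{x\in D_i}|\braket{x|\tps_i}|^2$. One run of test (2) rejects with probability exactly $a_i$. The two vectors $\ket{\tps_i}$ and $G_i\ket{\tps_{i-1}}$ agree off $D_i$ and both have unit norm, so $G_i\ket{\tps_{i-1}}$ also has mass $a_i$ on $D_i$. Hence $\|\ket{\tps_i}-G_i\ket{\tps_{i-1}}\|^2\le 2(a_i+a_i)=4a_i$.

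Suppose every $a_i\le \epsilon:=1/(100m)^2$ and test (1) passes. By the triangle inequality and unitarity, $\|\ket{\tps_m}-C\ket{0^N}\|\le 2m\sqrt\epsilon\le 1/50$. The acceptance probability of $\ket{\tps_m}$ then differs from that of $C\ket{0^N}$ by at most $2/50$, so for a NO instance it is below $2/3$ and test (3) rejects with certainty. Otherwise some $a_i>\epsilon$, and the $k$ repetitions of that test all pass with probability at most $(1-\epsilon)^k=2^{-\poly(n)}$ for $k=\poly(n)/\epsilon$. The totals are $\poly(n)$ proof queries (adaptive, via sampling), $\poly(n)$ oracle queries and $\poly(n)$ coins.

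\textbf{Main obstacle.} I expect the hardest part to be the exact-arithmetic bookkeeping that makes perfect completeness hold while the verifier stays efficient. The bit-lengths of the honest conditional probabilities must stay polynomial despite being sums over exponentially many leaves; I would bound them using the common denominator $2^{m}$ in $\mathbb{Q}(\sqrt2)$. In addition, the renormalization convention for arbitrary strings must be chosen so that sampling and amplitude access refer to the \emph{same} unit vector.
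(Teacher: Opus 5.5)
Your proof is correct and has the same architecture as the paper's. The proof string is the state-synthesis table $(\gamma,p)$ of every intermediate state. The verifier checks the initial tree, runs the sample-and-compare test between $\braket{X|\tps_i}$ and $\bra{X}G_i\ket{\tps_{i-1}}$ for each gate, and checks the root probability of the last tree. Soundness comes from the same telescoping hybrid over $\|\ket{\tps_i}-G_i\ket{\tps_{i-1}}\|$.

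You depart from the paper in two ingredients.

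\textbf{Detection step.} The paper uses the unbiased estimator $\frac1k\sum_j\phi_{X_j}/\psi_{X_j}$ of $\braket{\psi|\phi}$ together with Chebyshev (\Cref{lem:inner-product-estimation} and \Cref{cor:approx_eq}). This gives a per-sample detection probability of $\delta^2/10$ when $|1-\braket{\psi|\phi}|>\delta$, and it has to discard samples of tiny amplitude. You instead observe that two unit vectors agreeing off $D_i$ carry equal mass on $D_i$, so $\|\ket{\tps_i}-G_i\ket{\tps_{i-1}}\|^2\le 4a_i$, where $a_i$ is exactly the rejection probability. This is more elementary, gives a linear rather than quadratic relation, and plugs straight into the hybrid. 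It also survives a $2^{-100n}$ comparison tolerance with only an exponentially small additive loss.

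\textbf{Completeness.} The paper allows an arbitrary gate set, $\poly(n)$-bit dyadic entries and a $2^{-100n}$ tolerance, and leaves the rounding analysis informal. Your restriction to $\{H,\text{Toffoli},\cO\}$ makes every honest squared amplitude a rational $a_x^2/2^{h}$. The honest table is then exact, and perfect completeness is airtight. The price is a gate-set conversion (real encoding) and exact arithmetic in the verifier.

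Two loose ends in your version are easy to fix.
\begin{enumerate}
\item For a cheating table, $\prod_k p_{x_k\mid x_{<k}}$ need not be a square in $\mathbb{Q}(\sqrt2)$, so amplitude access is not literally arithmetic in $\mathbb{Q}(\sqrt2)$. You can decide equality of the $O(1)$ square roots in each test exactly. Alternatively, reject whenever such a product is not a square: honest tables never trigger this, and extra rejections only help soundness.
\item The paper's sampler draws Bernoullis with dyadic parameters, but your honest conditional probabilities are ratios of integers and in general not dyadic. A verifier with $\poly(n)$ coins can therefore sample $X\sim\tps_i$ only to within $2^{-\poly(n)}$ statistical distance. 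This weakens ``rejects with probability exactly $a_i$'' to ``at least $a_i-2^{-\poly(n)}$''. Perfect completeness is unaffected, since the honest table passes for every $X$.
\end{enumerate}
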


\begin{proof}
Let $C = \{G_1, \dots, G_m\}$ denote the $\BQP^\cO$ computation acting on input $\ket{\psi_0} = \ket{0^n}$, where each $G_i$ is either a single or two-qubit gate in any universal gate set $S$, or an oracle gate $\cO_f$:

\begin{equation*}
    {\cO_f} \ket{x} \rightarrow (-1)^{f(x)} \ket{x}.
\end{equation*}

Suppose that the state of the computation after applying gate $G_i$ is:

\begin{equation*}
    \ket{\psi_i} = \sum_{x \in \{0,1\}^n} \alpha_{i, x} \ket{x}
\end{equation*}

The proof would consist of claimed descriptions of all these states.
Suppose the claimed states by the prover are 
\begin{equation*}
    \ket{\tps_i} = \sum_{x \in \{0,1\}^n} \tilde{\alpha}_{i, x} \ket{x}
\end{equation*}
To verify $C$ it suffices to test that (i) the claimed states $\ket{\tps_i}$ are ``approximately truthful'' in the sense that they are close to the true states $\ket{\psi_i}$ and (ii) the probability of measuring $0$ on the first qubit of the last state is at least $2/3-o(1)$. 
To verify that the states are approximately truthful, it suffices to check that they are locally consistent, namely that for each gate $G_i$, 
    \begin{equation*}| 1 - \bra{\tps_i} G_i \ket{\tps_{i-1}}| < 1/\poly(n).
    \end{equation*}
    Indeed, as we will show later, such a condition implies that $\ket{\tps_i} \approx \ket{\psi_i}$ for all $i=0,\ldots, m$.

    To specify the states, the proof $\pi$ contains:\begin{enumerate}
        \item For every $i\in [m]$ and $x\in \{0,1\}^n$, a complex phase $\gamma_{i,x}\in \C$, which for the truthful proof would satisfy 
        \begin{equation}\label{eq:true_gamma}
            \gamma_{i,x} = \frac{\alpha_{i,x}}{|\alpha_{i,x}|}.
        \end{equation}
        \item  For every $i\in [m]$, $k\in [n]$, and prefix $w_1\dots w_k\in \{0,1\}^k$, a conditional probability $p_{i,w_k|w_1\dots w_k-1}\in [0,1]$ which for the truthful proof would satisfy\footnote{Of course, the verifier does not know whether \Cref{eq:true_gamma,eq:true_p_iw} hold.}  
        \begin{equation}\label{eq:true_p_iw} p_{i,w_k|w_1\dots w_k-1}  
        = \frac{\sum_{y\in \{0,1\}^{n-k}}|\alpha_{i,w_1\dots w_k y}|^2}{\sum_{y\in \{0,1\}^{n-(k-1)}}|\alpha_{i,w_1\dots w_{k-1} y}|^2}\end{equation}      
    \end{enumerate}

     We remark that without loss of generality we can assume that for any $i\in [m], k\in [n], w_1\ldots, w_{k-1}\in \{0,1\}$ it holds that $p_{i,0|w_1\dots w_{k-1}} + p_{i,1|w_1\dots w_{k-1}} = 1$, since the prover can just encode one of these conditional probabilities, say $p_{i,1|w_1\dots w_{k-1}}$, and the verifier would either read it directly if $w_k=1$ or compute $1-p_{i,1|w_1\dots w_{k-1}}$ if $w_k=0$. 









At a high level, $\pi = (\gamma, p)$ is the truth table of the state synthesis oracle from \cite{aaronson2016complexityquantumstatestransformations}. The idea is that each single (resp. two) qubit gate acts locally and thus can be verified by testing the 2 (resp. 4) dimensional subspace it acts on via local checks to $\pi$. Moreover, each oracle gate can be verified with a consistency check query to the oracle $\cO$. We now describe the verifier's algorithm, beginning with its subroutines. 

\paragraph{Computing State Amplitudes} The following subroutine describes the process that the verifier uses to recover the amplitudes $\tilde{\alpha}_{i, x}$, given a gate number $i \in [m]$ and an input $x \in \{0,1\}^n$.

\begin{algorithm}[H]\label{alg:compute-norms}
\caption{Computing State Amplitudes}
\KwIn{Query access to $\pi = (\gamma,p)$, index $i \in [m]$, input $x_1\dots x_n \in \{0,1\}^n$.}
\KwOut{$\tilde{\alpha}_{i, x} \in \C$}
\Return{$\gamma_{i,x} \cdot \left(\prod_{k=1}^{n} p_{i,x_k|x_1\dots x_{k-1}}\right)^{1/2}$}
\end{algorithm}











\begin{claim}\label{claim:correctness-amplitudes} For every proof $\pi \in \{0,1\}^*$ and $i \in [m]$, there exists a quantum state $\ket{\phi_i}$:
\begin{equation*}
    \ket{\phi_i} = \sum_{x \in \{0,1\}^n} \tilde{\alpha}_{i, x} \ket{x}
\end{equation*}
    such that Algorithm~\ref{alg:input-sampling} outputs $\tilde{\alpha}_{i, x}$ on input $\pi, i$ and $x \in \{0,1\}^n$.
\end{claim}

\begin{proof}
By our assumption that $p_{i,0|w_1\dots w_{k-1}} + p_{i,1|w_1\dots w_{k-1}} = 1$ we indeed get that for any $i\in [m]$ one can define a probability distribution over $\{0,1\}^n$ by taking $\Pr[X=x] = \prod_{k=1}^{n} p_{i,x_k|x_1\dots x_{k-1}}$.
As $|\gamma_{i,x}|=1$, this means that the sum of amplitudes squared equals $1$, and thus the state $\sum_{x \in \{0,1\}^n} \tilde{\alpha}_{i, x} \ket{x}$ is a valid quantum state.\end{proof}

\paragraph{Input Sampling} The verifier makes use of the following sampling subroutine. Given query access to the proof $\pi = (\gamma,p)$ and an initial index $i$, it samples $x \in \{0,1\}^n$ with probability $|\tilde{\alpha}_{i, x}|^2$. We write $wb$ to denote concatenation of string $w$ and bit $b$. We write $\varepsilon$ to denote the empty string.

\begin{algorithm}[H]\label{alg:input-sampling}
\caption{Conditional Input Sampling}
\KwIn{Query access to $\pi = (\gamma,p)$, index $i \in [m]$, input length $n \geq 1$.}
\KwOut{$x \in \{0,1\}^n$}
\BlankLine

Initialize $w = \varepsilon$.

\For{$k = 0$ \KwTo $n-1$}{
Query $p_{i, 1|w}$

Sample $b$ as a Bernoulli random variable with probability $p_{i,1|w}$ equaling $1$.

Update $w \gets wb$ 
}

\Return{$w$}

\end{algorithm}









\begin{claim}\label{claim:correctness-sampling} For every proof $\pi \in \{0,1\}^*$, $i \in [m]$ and $n \in \mathbb{N}$, let the state $\ket{\phi_i} = \sum_{x \in \{0,1\}^n} \tilde{\alpha}_{i, x} \ket{x}$
be the state in Claim~\ref{claim:correctness-amplitudes}. Then Algorithm~\ref{alg:input-sampling} outputs $x \in \{0,1\}^n$ with probability $|\tilde{\alpha}_{i, x}|^2$. 
\end{claim}

\begin{proof}
Let $X$ denote a random variable taking values in $\{0,1\}^n$. Algorithm~\ref{alg:input-sampling} outputs $X = x_1 \cdots x_n$ with the following probability: 
\begin{align*}
    \Pr[X = x_1 \cdots x_n] &= \prod_{k=1}^n \Pr[X_k = x_k | X_{1}\cdots X_{k-1} = x_1 \cdots x_{k-1}] 
    = \prod_{k = 1}^n p_{i,x_k|x_1\dots x_{k-1}}
     = \left|\tilde{\alpha}_{i, x}\right|^2
\end{align*}
The claim follows.
\end{proof}

\paragraph{Local Checks.} The following subroutine describes the verifier's queries to the proof $\pi$ to compute the coefficients used in the local checks of the protocol. We write $x \oplus e_q$ to denote the string obtained by flipping the $q^{th}$ bit of $x$ and $x_q$ to denote the $q^{th}$ bit of $x$.

\begin{algorithm}[h]\label{alg:local-checks}
\caption{Local Checks}
\KwIn{Query access to $\pi$, index $i \in [m]$, input $x \in \{0,1\}^n$, gate $G_i$.}
\KwOut{$\eta_{i, x} \in \mathbb{C}$}
\BlankLine
\begin{enumerate}

            \item If $G_i \in S$ and $G_i$ acts on a single qubit $q$ run Algorithm~\ref{alg:compute-norms} on: \[(\pi[i-1, x] , \pi[i-1, x \oplus e_q])\] to get \[(\tilde{\alpha}_{i-1, x}, \tilde{\alpha}_{i-1, x \oplus e_q}).\] 
            Compute
            \[ \eta_{i,x} = (G_{i})_{(x_q, x_q)} \cdot \tilde{\alpha}_{i-1,x} + (G_{i})_{(x_q, \bar{x}_q )} \cdot \tilde{\alpha}_{i-1,x \oplus e_q}. \]
            \item If $G_i \in S$ and $G_i$ acts on two qubits $(q, s)$, run Algorithm~\ref{alg:compute-norms} on: \[\pi[i-1, x], \pi[i-1, x \oplus e_q], \pi[i-1, x \oplus e_s], \pi[i-1, x \oplus e_q \oplus e_s]\] to get  \[(\tilde{\alpha}_{i-1, x}, \tilde{\alpha}_{i-1, x \oplus e_q}, \tilde{\alpha}_{i-1, x \oplus e_s}, \tilde{\alpha}_{i-1, x \oplus e_q \oplus e_s}).\] 
            
             Compute

             \begin{align*}
                \eta_{i,x} = (G_i)_{(x_qx_{s},  x_qx_s)} \cdot \tilde{\alpha}_{i-1, x} &+ (G_i)_{(x_qx_{s},  \bar{x}_qx_s)} \cdot \tilde{\alpha}_{i-1, x \oplus e_q} \\
                &+ (G_i)_{(x_qx_{s},  x_q\bar{x}_s)} \cdot \tilde{\alpha}_{i-1, x \oplus e_s} + (G_i)_{(x_qx_{s},  \bar{x}_q\bar{x}_s)} \cdot \tilde{\alpha}_{i-1, x \oplus e_q \oplus e_s}
             \end{align*}

        \item If $G_i = \cO_f$:

        run Algorithm~\ref{alg:compute-norms} on $\pi[i-1, x]$ to get $\tilde{\alpha}_{i-1, x}$ and query $\cO_f$ at $x$ to obtain $f(x)$. 
        
        Compute
            \[ \eta_{i,x} = (-1)^{f(x)} \cdot \tilde{\alpha}_{i-1,x}. \]
        
       
        \end{enumerate}
\Return{$\eta_{i, x}$}

\end{algorithm}

\begin{claim}\label{claim:correctness-local-checks}
   Let $\ket{\phi_{i-1}} \in \cH^{2^n}$ be the state from Claim~\ref{claim:correctness-amplitudes}. For any $x \in \{0,1\}^n$ Algorithm~\ref{alg:local-checks} outputs:

   \begin{equation*}
       \eta_{i, x} = \bra{x}G_i \ket{\phi_{i-1}}
   \end{equation*}
\end{claim}

\begin{proof}
The state $\ket{\phi_{i-1}}$ has the following form: 

\begin{equation*}
    \ket{\phi_{i-1}} = \sum_{x \in \{0,1\}^n} \tilde{\alpha}_{i-1, x} \ket{x}
\end{equation*}

We analyze each case of $G_i$ separately. 

\paragraph{Single-qubit gates} If $G_i \in S$ and $G_i$ acts on a single qubit $q$, then $G_i$ splits $\cH^{2^n}$ into $2^{n-1}$ subspaces, each spanned by:

\begin{equation*}
    \ket{x_1, \dots, x_{q-1}, 0, x_{q+1}, \dots x_n} \text{ and } \ket{x_1, \dots, x_{q-1}, 1, x_{q+1}, \dots x_n},
\end{equation*}

for a fixed $x_1, \dots, x_n \in \{0,1\}^n$. Therefore, for $x_1, \dots, x_n \in \{0,1\}^n$, $\tilde{\alpha}_{i-1, x}$ is related to the new amplitude $\tilde{\alpha}_{i, x}$ by: 

\begin{equation*}
    \tilde{\alpha}_{i, x} = G_{x_q0} \cdot \tilde{\alpha}_{i-1, x_1, x_2, \dots, x_{q-1}, 0, x_{q+1}, \dots, x_n} + G_{x_q1} \cdot \tilde{\alpha}_{i-1, x_1, x_2, \dots, x_{q-1}, 1, x_{q+1}, \dots, x_n},
\end{equation*}

which are the checks performed in the protocol.

\paragraph{Two-qubit gates} If $G_i \in S$ and $G_i$ acts on two qubits $q, s$, then $G_i$ splits $\cH^{2^n}$ into $2^{n-2}$ subspaces, each spanned by the following vectors, for any $x_1, \dots, x_n \in \{0,1\}^n$:

\begin{align}
     \ket{x_1, \dots, x_{q-1}, 0, x_{q+1}, \dots, x_{s-1}, 0, x_{s+1}, \dots, x_n}, \nonumber\\
    \ket{x_1, \dots, x_{q-1}, 0, x_{q+1}, \dots, x_{s-1}, 1, x_{s+1}, \dots, x_n}, \nonumber\\
     \ket{x_1, \dots, x_{q-1}, 1, x_{q+1}, \dots, x_{s-1}, 0, x_{s+1}, \dots, x_n},\nonumber\\
      \ket{x_1, \dots, x_{q-1}, 1, x_{q+1}, \dots, x_{s-1}, 1, x_{s+1}, \dots, x_n}.\nonumber
\end{align}

Therefore, for a fixed $x_1, \dots, x_n \in \{0,1\}^n$, $\tilde{\alpha}_{i-1, x}$ is related to the new amplitude $\tilde{\alpha}_{i, x}$ by: 

\begin{align}
    \tilde{\alpha}_{i, x} = G_{(x_qx_s, 00)} \cdot \tilde{\alpha}_{i-1, x_1, \dots, x_{q-1}, 0, x_{q+1}, \dots, x_{s-1}, 0, x_{s+1}, \dots, x_n} \nonumber \\
    + G_{(x_qx_s, 01)} \cdot \tilde{\alpha}_{i-1, x_1, \dots, x_{q-1}, 0, x_{q+1}, \dots, x_{s-1}, 1, x_{s+1}, \dots, x_n} \nonumber \\
    + G_{(x_qx_s, 10)} \cdot \tilde{\alpha}_{i-1, x_1, \dots, x_{q-1}, 1, x_{q+1}, \dots, x_{s-1}, 0, x_{s+1}, \dots, x_n} \nonumber\\
    + G_{(x_qx_s, 11)} \cdot \tilde{\alpha}_{i-1, x_1, \dots, x_{q-1}, 1, x_{q+1}, \dots, x_{s-1}, 1, x_{s+1}, \dots, x_n}\nonumber
\end{align}

These are exactly the checks performed in the protocol.

\paragraph{Oracle gates} If $G_i = \cO_f$:

\begin{equation}\label{eq:desired-outcome}
    \bra{x}G_i\ket{\phi_{i-1}} = (-1)^{f(x)} \cdot \tilde{\alpha}_{i-1, x}. 
\end{equation}

Queries to the proof $\pi$ return $\tilde{\alpha}_{i-1, x}$
and the algorithm sets:

\begin{equation*}
    \eta_{i, x} = (-1)^{f(x)} \cdot \tilde{\alpha}_{i-1, x},
\end{equation*}

which matches Equation~\ref{eq:desired-outcome}.

\end{proof}

\paragraph{Full Verifier Algorithm} We now describe the full algorithm of the verifier.

\begin{algorithm}[H]\label{alg:verifier-alg}
\caption{Full Verifier Algorithm}
\KwIn{Query access to $\pi$.}
\KwOut{$\{\acc, \rej\}$}
\BlankLine

\begin{enumerate}
        \item Query $\tilde{\alpha}_{0, 0^n} \leftarrow \pi[0, 0^n]$ using Algorithm~\ref{alg:compute-norms}. Output $\rej$ if $1 \neq \tilde{\alpha}_{0, 0^n}$.
        \item For every $i \in \{1, \dots, m\}$:
        
        \begin{enumerate}
        \item If $i = m$: query $p_{m,1|\eps}$ and output $\rej$ if \[p_{m,1|\eps} < 2/3.\]
        \item Repeat the following $t = \poly(n)$ times:
        \begin{enumerate}
        \item Sample $x \leftarrow \{0,1\}^n$ using Algorithm~\ref{alg:input-sampling}.     
        \item Query $\tilde{\alpha}_{i, x} \leftarrow \pi[i, x]$ using Algorithm~\ref{alg:compute-norms}. 
        

        \item Use Algorithm ~\ref{alg:local-checks} to compute $\eta_{i, x}$.
        

        \item Output $\rej$ if \begin{equation*}|\eta_{i,x} - \tilde{\alpha}_{i,x}| \ge 2^{-100n}.\end{equation*}
        \end{enumerate}

        
\end{enumerate} 
        \item Output $\acc$.
    \end{enumerate}

\end{algorithm}

The soundness of this protocol relies on the following lemma and its corollary.

\begin{lemma}
\label{lem:inner-product-estimation}
Let, $\ket{\psi}, \ket{\phi}$ be normalized vectors in $\mathbb{C}^{2^n}$.
The random variable
\[\gamma = \frac{1}{k} \sum_{i=1}^k \frac{\phi_{X_i}}{\psi_{X_i}},\] where ${X_1}, \ldots, {X_k}$ are $k$ independent samples from the distribution $\Pr[X=x] = \abs{\psi_x}^2$, is an unbiased estimator for $\ip{\psi}{\phi}$ such that, for any $\delta > 0$,

\begin{equation*}
    \Pr_{X \sim \psi}\left[|\gamma - \ip{\psi}{\phi}| \geq \delta \right] \leq \frac{1}{k \delta^2}.
\end{equation*}

\end{lemma}

\begin{proof}
Observe that:
\begin{align}
  \ip{\psi}{\phi}
  &= \sum_x \psi_x^\ast \phi_x \label{eq:inner-prod-sum} \\
  &= \sum_x \abs{\psi_x}^2 \cdot \frac{\phi_x}{\psi_x} \nonumber \\
  &= \E_{X \sim \psi}\!\left[\frac{\phi_x}{\psi_x}\right],
  \label{eq:inner-prod-expectation}
\end{align}
where $X$ is a random variable taking values in $\{0,1\}^n$ with
$\Pr[X=x]=\abs{\psi_x}^2$ and $\psi_x\neq 0$ whenever we sample $x$.\footnote{For
indices with $\psi_x=0$ we may define $\frac{\phi_x}{\psi_x} := 0$ since they are never
sampled.} In what follows, we drop the specification of the underlying distribution $X \sim \psi$ for clarity. 

Next we upper bound the variance of $\frac{\phi_x}{\psi_x}$.  We have:
\begin{align*}
  \mathbb{E}\left[\left|\frac{\phi_x}{\psi_x}\right|^2\right]
  &= \sum_x \Pr[X=x] \cdot \frac{\abs{\phi_x}^2}{\abs{\psi_x}^2}
   = \sum_x \abs{\psi_x}^2 \cdot \frac{\abs{\phi_x}^2}{\abs{\psi_x}^2}
   = \sum_x \abs{\phi_x}^2
   = 1,
\end{align*}

Therefore:
\[
  \var\left(\frac{\phi_x}{\psi_x}\right)
  = \mathbb{E}\left[\abs{\frac{\phi_x}{\psi_x}}^2\right] - \abs{\E\left[\frac{\phi_x}{\psi_x}\right]}^2
  = 1 - \abs{\ip{\psi}{\phi}}^2
  \le 1.
\]

Now take $k$ independent samples $X_1,\dots,X_k$ from the distribution
$\Pr[X=x]=\abs{\psi_x}^2$ and define the empirical mean:
\[
  \gamma
  := \frac{1}{k} \sum_{i=1}^k \frac{\phi_{X_i}}{\psi_{X_i}}.
\]

Then $\gamma$ is an unbiased estimator for $\ip{\psi}{\phi}$:
\[
  \E[\gamma]
  = \frac{1}{k} \sum_{i=1}^k \mathbb{E}\left[\frac{\phi_{X_i}}{\psi_{X_i}}\right] 
  = \frac{1}{k} \cdot k \cdot \mathbb{E}\left[\frac{\phi_{X}}{\psi_{X}}\right] 
  = \ip{\psi}{\phi}.
\]
Since the samples are independent,
\[
  \var(\gamma)
  = \frac{1}{k^2}\sum_{i=1}^k \var\!\left(\frac{\phi_{X_i}}{\psi_{X_i}}\right)
  = \frac{1}{k^2}\cdot k \cdot \var\left(\frac{\phi_{X}}{\psi_{X}}\right)
  = \frac{\var\left(\frac{\phi_{X}}{\psi_{X}}\right)}{k}
  \le \frac{1}{k}.
\]

Now, applying Chebyshev:

\[
  \Pr\left[\,\abs{\gamma - \ip{\psi}{\phi}} \geq \epsilon \right]
  \le \frac{\var{(\gamma)}}{\epsilon^2} \leq \frac{1}{k \cdot \epsilon^2}.
\]
\end{proof}

\begin{corollary}\label{cor:approx_eq}
Let, $\ket{\psi}, \ket{\phi}$ be normalized vectors in $\mathbb{C}^{2^n}$. Let $2^{-n} <\delta \le 1$.
Suppose $|1-\ip{\psi}{\phi}| > \delta$.
Then, with probability at least $\delta^2/10$ over $X\sim \psi$ it holds that 
$|\phi_X-\psi_X| \ge 2^{-100n}$.
\end{corollary}
\begin{proof}
    Suppose by contradiction that with probability less than $\delta^2/10$ over $X\sim \psi$, it holds that  $|\phi_X-\psi_X| \ge 2^{-100n}$. Then by union bound on $k = 4/\delta^2$ samples, with probability at least $1/2$ we have that all $X_1,\ldots, X_{k}$ satisfy $|\phi_{X_i}-\psi_{X_i}|\le 2^{-100n}$ and also samples satisfy $|\psi_{X_i}|^2\ge 2^{-50n}$, for all $i\in [k]$. But then, \[|1-\psi_{X_i}/\phi_{X_i}|\le 2^{-50n}\]
    for all $i\in [k]$, and the empirical mean $\gamma$ in \Cref{lem:inner-product-estimation} would be exponentially close to $1$. But this means that $|\gamma-\ip{\psi}{\phi}|\ge \delta-2^{-50n}\ge 0.9\delta$.
    We get a contradiction as \[\frac12 \le \Pr[|\gamma-\ip{\psi}{\phi}|\ge 0.9\delta] \le \frac{1}{k (0.9\delta)^2} = \frac1{(4/\delta^2) (0.9\delta)^2} < \frac{1}{3}\;.\]
\end{proof}


\paragraph{Completeness.}

\begin{claim}
    Let $\pi$ be the proof described above and let $V^{\pi, \cO}$ be the algorithm defined in Algorithm~\ref{alg:verifier-alg}. If $x \in \YES$, then: 

    \begin{equation*}
        \Pr[V^{\pi, \cO}(x) = \acc] = 1.
    \end{equation*}

\end{claim}

\begin{proof}
\avishay{Technically, we should talk about the errors due to precision more accurately. I attempted to do this}
    For $\pi$ provided by an honest prover, $\tilde{\alpha}_{0, 0^n} = 1$ and thus step $1$ of the protocol will pass with probability $1$. All entries of the proof are recorded with $\poly(n)$ bits of precision, and therefore an honest prover will respond with precision $1/2^{\poly(n)}$. 
    If $i = m$, and $x \in L^\cO$, the proof will be accepted with probability at least  $2/3$, and thus the verifier will pass check $2a$. 
    For each $i \in [m]$, the verifier checks whether $\tilde{\alpha}_{i, x}$ and $\eta_{i, x}$ are the same up to error $2^{-\poly(n)}$. Indeed, in the honest proof, $\tilde{\alpha}_{i, x}$ and $\eta_{i, x}$ would be the same up to an additive error of at most $2^{-\poly(n)}$ due to precision, so step 2b(iv) would never reject.
%
\end{proof}



\paragraph{Soundness.} 

\begin{claim}
    Let $V^{\pi, \cO}$ be the algorithm defined in Algorithm~\ref{alg:verifier-alg}. If $x \in \NO$, then, for all $\pi \in \{0,1\}^*$:

    \begin{equation*}
        \Pr[V^{\pi, \cO}(x) = \rej] \geq 1- 2^{-\poly(n)}
    \end{equation*}
\end{claim}

\begin{proof}
If the initial state corresponding to the proof $\pi$ is not equal to $\ket{0^n}$, then the verifier will output $\rej$ in step $1$ with probability $1$. If the final state corresponding to the proof $\pi$ has probability of measuring $0$ in the first bit being $< 2/3$, then the verifier will output $\rej$ with probability $1$ in step $2(a)$. For the rest of the proof, we assume that the initial state is equal to $\ket{0^n}$ and that the final state has $0$ in the first bit with probability $> 2/3$. We will use this to show that the proof was dishonest during a propagation term, since we know that $x \in \NO$ and thus these two conditions cannot both hold for a valid computation history.

Denote by $\ket{\tilde{\psi}_i}$ the state corresponding to the proof $\pi$ after applying gate $G_i$. Since $x \in \NO$ and, by assumption, the initial state is $\ket{0^n}$ and the final state outputs $0$ with probability $\geq 2/3$, there must exist an $i \in [m]$ such that:

\begin{equation}\label{eq:inner-prod-fail}
    | 1 - \bra{\tilde{\psi}_i} G_i \ket{\tilde{\psi}_{i-1}} | > \delta
\end{equation}
for $\delta = \frac{1}{\poly(n)}$.

We show this by contradiction. Suppose that, for all $i$:

\begin{equation}\label{eq:locally_close}
    | 1 - \bra{\tilde{\psi}_i} G_i \ket{\tilde{\psi}_{i-1}} | \leq \delta.
\end{equation}
\avishay{Edited the following (until the end of the proof)}
Recall that the true states of the BQP algorithm are given by $\ket{\psi_0}, \ldots, \ket{\psi_m}$ where $\ket{\psi_0} = \ket{0^n}$ and for $i\in [m]$, $\ket{\psi_i} = G_i \ket{\psi_{i-1}}$.
We show by induction on $i\in \{0,1\ldots, m\}$ that 
$\| \ket{\tps_i} - \ket{\psi_i} \|\le i \cdot \sqrt{2\delta}$.
The base case is trivially true as both vectors $\ket{\tps_0}$ and $\ket{\psi_0}$ are equal to $\ket{0^n}$.
Next, for $i>0$, we note that by \Cref{eq:locally_close}, 
\begin{align}
    \|\ket{\tps_i} - G_i \ket{\tps_{i-1}}\|^2 &= 2 - 2\Re \bra{\tps_i} G_i \ket{\tps_{i-1}} \\ &\leq 2\delta \\
    \| \ket{\tps_i} - G_i \ket{\tps_{i-1}} \| &\leq \sqrt{2\delta}.\label{eq:71}\end{align}
Thus,
\begin{align*}
    \| \ket{\tps_i} - \ket{\psi_i} \|  &= 
    \| \ket{\tps_i} - G_i \ket{\tps_{i-1}} + G_i \ket{\tps_{i-1}} - \ket{\psi_i}\| \\
    &\le 
    \| \ket{\tps_i} - G_i \ket{\tps_{i-1}}\| + \|G_i \ket{\tps_{i-1}} - \ket{\psi_i}\|  
    \tag{By triangle inequality} \\
    &\le 
     \sqrt{2\delta} + \|G_i \ket{\tps_{i-1}} - G_i \ket{\psi_{i-1}}\|
     \tag{\Cref{eq:71}}\\
     &=
     \sqrt{2\delta} + \|\ket{\tps_{i-1}} - \ket{\psi_{i-1}}\|
     \tag{$G_i$ is a unitary}\\
     &\le \sqrt{2\delta} + (i-1)\sqrt{2\delta} 
     \tag{Induction hypothesis}\\&= i\sqrt{2\delta}
\end{align*}
Overall, we get that for \begin{align*}
     \ket{u} := \ket{\psi_m}  = G_m G_{m-1} \dots G_1 \ket{0^n}, \qquad \ket{v} &:= \ket{\tps_m}\end{align*}
    it holds that 
    \begin{align*}
    \| \ket{u} - \ket{v} \| &\leq m \sqrt{2\delta}.
\end{align*}



Since $x \in \NO$, this implies that the probability that measuring $\ket{\psi_m}$ and obtaining $0$ on the first qubit is $\leq 1/3$. It follows that the probability of measuring $\ket{\tps_m}$ and obtaining $0$ on the first qubit is at most
$\Pr[1] \leq 1/3 + 2m \sqrt{2\delta}$.

\begin{align*}
    \bra{u} \Pi_0\ket{u} &\leq 1/3 \\
    \bra{v} \Pi_0 \ket{v} &= \bra{u}\Pi_0\ket{u} + (\bra{v} - \bra{u}) \Pi_0 \ket{u} + \bra{v} \Pi_0 (\ket{v} - \ket{u}) \\
    &\leq 1/3 + m\sqrt{2\delta} + m\sqrt{2\delta}.
\end{align*}

If we set $\delta \ll 1/m^2$, then we get that $\bra{v} \Pi_0 \ket{v} < 2/3$, which contradicts the assumption that $\bra{\tps_m} \Pi_0 \ket{\tps_m} \geq 2/3$. Therefore, there must exist an $i$ such that \eqref{eq:inner-prod-fail} holds. 

We are now going to argue that the verifier must reject with high probability on round $i$. Indeed, by picking $t = \poly(n)/\delta^2$ we get that the verifier would  catch the prover's inconsistency in round $i$ except with extremely small failure probability  \[\Pr[\text{pass $t$ checks in round $i$}] \le (1-\delta^2/10)^t\le 2^{-\poly(n)}\]
by Corollary~\ref{cor:approx_eq}.

\end{proof}



\end{proof}

\section{Compiling to $\MIP$}\label{sec:mip}

We use the clause-variable game to compile the $\PCP$ protocol from above into an $\MIP$.

\begin{theorem}\label{thm:MIP_for_BQP}
    For any oracle $\cO$:
    \begin{equation}
        \BQP^\cO \subseteq \MIP^\cO
    \end{equation}
\end{theorem}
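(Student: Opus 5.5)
We derive this from \Cref{thm:pcp-for-bqp} using the two-round adaptive clause-variable transformation sketched in the technical overview. Fix $L \in \BQP^\cO$ and let $V_{\PCP}$ be the adaptive $\PCP$ verifier of \Cref{thm:pcp-for-bqp}. It has perfect completeness and soundness $s = 2^{-\poly(n)}$, uses $r = \poly(n)$ coins, and makes $q_\pi = \poly(n)$ adaptive proof queries and $\poly(n)$ oracle queries. The $\MIP^\cO$ verifier $V$ has two provers and runs as follows.

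First, $V$ samples $R \in \{0,1\}^r$ and sends $R$ to $P_1$. $P_1$ replies with answers $(a_1,\dots,a_{q_\pi})$. $V$ then simulates $V_{\PCP}$ on coins $R$, querying $\cO$ itself whenever the simulation asks for it. The $j$-th proof query is computed from $R$, the previous answers $a_1,\dots,a_{j-1}$, and the oracle answers; call this location $i_j$, and answer it with $a_j$. This determines $i_1,\dots,i_{q_\pi}$ and the simulated decision. Next, $V$ picks $j \in [q_\pi]$ uniformly and sends $i_j$ to $P_2$, who replies with a value $b$. $V$ accepts iff the simulation accepts and $b = a_j$. All of this is $\poly(n)$ work and $\poly(n)$ queries to $\cO$.

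\textbf{Completeness.} Both provers answer according to the honest proof $\pi$. Then every check passes, so $V$ accepts with probability $1$.

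\textbf{Soundness.} By averaging we may assume deterministic provers. Since $P_2$ sees only a location, its strategy defines a proof string $\pi(i) := P_2(i)$. Fix $R$ and suppose that on $R$ the answers of $P_1$ disagree with $\pi$ somewhere. Let $j^*$ be the first index with $a_{j^*} \neq \pi(i_{j^*})$. Because the query sequence is adaptive, for every $j \le j^*$ the location $i_j$ coincides with the location $V_{\PCP}^{\pi}$ queries on $R$, since all earlier answers agree. The consistency test picks $j = j^*$ with probability $1/q_\pi$ and then rejects. If instead there is no disagreement, the simulation is exactly $V_{\PCP}^{\pi,\cO}(x;R)$, which accepts only for a $\le s$ fraction of $R$. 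Therefore
\[ \Pr[\acc] \le s + (1-s)\left(1 - \tfrac{1}{q_\pi}\right) \le 1 - \tfrac{1}{2q_\pi}. \]

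\textbf{Amplification.} The gap from completeness $1$ to soundness $1 - 1/(2q_\pi)$ is closed by sequential repetition $O(q_\pi)$ times, with fresh independent coins each time and the same two provers. Because each repetition uses fresh coins, the provers' possibly history-dependent strategies still pass each round, conditioned on the past, with probability at most $1 - 1/(2q_\pi)$. This drives soundness below $1/3$ with polynomial overhead. Parallel repetition could also be cited instead.

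The main obstacle is the adaptivity argument. The extracted $\pi$ must be independent of $R$, which holds because $P_2$ never sees $R$, and the first-disagreement index must be chosen so that the $\MIP$ verifier asks $P_2$ the same location that $V_{\PCP}^\pi$ would query. A second point to check is that oracle queries made inside the simulation are answered by $V$ itself rather than by the provers, so relativization is preserved.
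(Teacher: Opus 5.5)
Your proposal is correct and follows the paper's route. The paper likewise uses the two-round adaptive clause-variable protocol, extracts the proof string from $P_2$'s strategy, and applies a first-disagreement argument (Lemma~\ref{lemma:adPCP_to_2_round_MIP}), then amplifies by sequential repetition with fresh coins (Corollary~\ref{cor:MIP}). One caution about your aside that parallel repetition ``could also be cited'': the standard parallel repetition theorems cover one-round games, but here $P_2$'s question depends on $P_1$'s answer, so you should rely on the sequential repetition you actually carry out.
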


The proof will be based on a reduction of any adaptive PCP relative to an oracle to an MIP relative to the same oracle.

\begin{lemma}\label{lemma:adPCP_to_2_round_MIP}
    For any oracle $\cO$, if $L\in \mathsf{adPCP}^{\cO}_{s,1}[p(n), q(n),r(n)]$, then $L$ has a two-round $\MIP^{\cO}$  protocol with \begin{itemize}
        \item perfect completeness,
        \item soundness $1-(1-s)/p(n)$,
        \item $r(n) + \log(p(n))$ randomness,
        \item $O(r(n) +p(n))$ bits of communication,
        \item $q(n)$ queries to $\cO$,
        \item where the verifier running time is $\poly(n, p(n),q(n),r(n))$.
    \end{itemize}
\end{lemma}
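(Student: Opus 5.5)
We follow the two-round clause-variable construction sketched in the technical overview. Given the adaptive $\PCP$ verifier $V$ for $L$, the $\MIP^\cO$ verifier $W$ proceeds as follows.

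\begin{enumerate}
\item $W$ samples coins $R\in\{0,1\}^{r(n)}$ and sends $R$ to $P_1$.
\item $P_1$ answers with $a_1,\dots,a_{p(n)}$.
\item $W$ simulates $V$ on $(x,R)$, answering the $j$th proof query with $a_j$ and answering oracle queries with its own access to $\cO$. This determines query locations $i_1,\dots,i_{p(n)}$, where $i_j$ depends on $R$ and on $a_1,\dots,a_{j-1}$, together with a decision of $V$. (Without loss of generality we pad so that $V$ always makes exactly $p(n)$ proof queries.)
\item $W$ samples a uniform $j\in[p(n)]$, using $\log p(n)$ further random bits, and sends $i_j$ to $P_2$.
\item $P_2$ answers with a symbol $b$.
\item $W$ accepts iff the simulated $V$ accepts and $b=a_j$.
\end{enumerate}

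The resource bounds are immediate. The randomness is $r(n)+\log p(n)$. The communication is $|R|+p(n)+|i_j|+1$. Here we must take care that a location index has $O(r(n)+p(n))$ bits; this holds because the proof can be taken to have length at most $2^{r(n)}\cdot 2^{p(n)}$, the number of leaves of all decision trees, so we can rename locations accordingly, or we simply count $\log|\pi|=\poly$. The verifier makes exactly $V$'s $q(n)$ oracle queries and runs in time polynomial in $n,p,q,r$. For completeness, both provers answer according to the honest proof $\pi$ with $\Pr[V^{\pi,\cO}=\acc]=1$. Then $P_1$'s answers are exactly $\pi_{i_1},\dots,\pi_{i_p}$ along $V$'s true query path, the simulation accepts, and $P_2$'s answer $\pi_{i_j}$ agrees with $a_j$.

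Soundness is the main step. Since the provers are non-communicating and classical, we may fix deterministic strategies by averaging. $P_2$ is then a fixed function $\pi$ from locations to symbols, and we take this $\pi$ as the extracted proof. Call $R$ \emph{good} if $P_1(R)$ agrees with $\pi$ on every location $i_1,\dots,i_p$ it induces. We then argue by induction on $j$ that on a good $R$ the simulated query sequence coincides with the genuine run of $V^{\pi,\cO}$ on $R$: the first $j$ answers are correct, so the $(j+1)$th location is the same. Hence on a good $R$, $W$ accepts only if $V^{\pi,\cO}(x;R)$ accepts, and this happens for at most an $s$ fraction of the coins $R$. On a bad $R$ some index $j^*$ has $a_{j^*}\neq\pi_{i_{j^*}}$, and $W$ selects that $j^*$ with probability at least $1/p(n)$, in which case it rejects. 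Thus
\[
\Pr[\acc]\le \Pr[\text{good}]+\Pr[\text{bad}]\left(1-\tfrac1{p(n)}\right)\le 1-\frac{\Pr[\text{bad}]}{p(n)}.
\]
Also $\Pr[\text{bad}]\ge 1-s-\Pr[\text{good and }W\text{ rejects}]$. A cleaner way to organize this is to note that $\Pr[W \text{ rejects}]\ge \Pr_R[\text{good and }V\text{ rejects}]+\Pr[\text{bad}]/p(n)\ge (1-s)/p(n)$, using $\Pr[\text{good}\wedge V\text{ rej}]\ge \Pr[V^{\pi}\text{ rej}]-\Pr[\text{bad}]\ge 1-s-\Pr[\text{bad}]$ together with $1/p\le1$.

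The main subtlety I expect is precisely this adaptivity point. The locations sent to $P_2$ depend on $P_1$'s answers, so the extracted proof must be defined from $P_2$ alone, and the induction along the query path is what justifies comparing to $V^{\pi,\cO}$. Non-adaptivity of $P_2$ to $R$ is guaranteed because $P_2$ sees only $i_j$. Finally, \Cref{thm:MIP_for_BQP} follows from this lemma combined with \Cref{thm:pcp-for-bqp}: the resulting soundness is $1-(1-2^{-\poly})/\poly$, and repeating the protocol in parallel polynomially many times drives it below $1/3$. Sequential repetition suffices if one wants to avoid the parallel repetition theorem.
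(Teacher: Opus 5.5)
Your proposal is correct and matches the paper's proof: the same two-round clause-variable protocol, the proof extracted from Prover~2's deterministic strategy, and the argument that any deviation by Prover~1 along the adaptive query path is caught with probability at least $1/p(n)$. Your case split (whether Prover~1 is consistent with the extracted proof, rather than whether $V^{\pi}$ rejects on $R$) is only a cosmetic reorganization. Your bound of $2^{r(n)}\cdot 2^{p(n)}$ on the relevant proof locations is actually more careful than the paper's $N_\pi \le p(n)\cdot 2^{r(n)}$, which counts only one query path per random string, though the $O(r(n)+p(n))$ communication bound holds either way.
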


\begin{proof}
Let $A = A^{\cO,\pi}$ be the adaptive PCP verifier for $L$.
We describe an $\MIP^{\cO}$ protocol for $L$.  
Let $N_{\pi} \le p(n)\cdot 2^{r(n)}$ be the proof length of the adaptive PCP on input $x$.

\paragraph{The protocol on input $x$:}
    \begin{enumerate}
        \item The verifier picks a random string $R\in \{0,1\}^{r(n)}$ and sends $R$ to Prover~1.
        \item Prover~1 sends $p(n)$ many bits in response $a = (a_1, \ldots, a_{p(n)})$.
        \item 
        The verifier simulates the adaptive PCP verifier, $A$, on randomness $R$.
        Whenever $A$ asks to query $\pi$ we feed it with the next bit the sequence $a$. Whenever $A$ asks to query $\cO$, we perform the query as is.
        If finally $A$ decides to reject, then immediately reject!
        \item Let $i_1, \ldots, i_{p(n)}\in [N_{\pi}]$ be the proof locations ``queried'' by the verifier in the above simulation.
        \item The verifier picks $j\in [p(n)]$ at random and sends $i_j$ to Prover~2.
        \item Prover~2 responds with $b\in \{0,1\}$.
        \item The verifier accepts if and only if $a_j = b$.
    \end{enumerate}

See \Cref{fig:mip-colored} for a visualization of the protocol.

\paragraph{Analysis:}

    We observe that the protocol indeed uses $r(n)+\log(p(n))$ bits of randomness,  sends $(r(n) + p(n) +\log(N_{\pi}) +1) = O(r(n) + p(n))$ bits of communication, and makes at most $q(n)$ queries to $\cO$.

        \medskip\noindent\textbf{Completeness.}
Next, we argue about \textbf{perfect completeness}. Suppose $x\in L$.
    Then by the assumption on $L$, there exists a proof $\pi$ such that $A^{\cO, \pi}$ always accepts. Then, if Prover~1 and Prover~2 answer truthfully according to the proof $\pi$, then the check in Step~3 will always pass as the simulation of  $A$ accepts, and so does the consistency check in Step~7.

    \medskip\noindent\textbf{Soundness.}
    Let $x\notin L$. By the assumption on the soundness of $A$, we know that for any proof $\pi'$, $A^{\pi',\cO}$ rejects $x$ with probability at least $1-s$.
    Without loss of generality, the strategies of Prover~1 and Prover~2 are deterministic given the questions they get from the Verifier. Define $\pi'$ to be the answer strategy of Prover~2. That is, let $\pi'[\ell]$ be the response of Prover~2 if the verifier sends $\ell \in [N_{\pi}]$ on Step~5.

    By the assumption, we know that for at least $(1-s)$ fraction of choices of $R\in \{0,1\}^{r(n)}$ we have that $A^{\pi',\cO}$ on randomness $R$ would reject. Call any such string $R$ a ``good'' string. Now, we show that if the verifier chooses a good string $R$, then she will reject with probability at least $1/p(n)$.
    Indeed, if Prover~1 answers according to $\pi'$, then the simulation would lead to reject immediately in Step~3.
    Otherwise, Prover~1 answers $a = (a_1, \ldots, a_{p(n)})$ where there is a first location $a_j$ where the answer differs from that of $\pi'$.
    Since the answers up to the $j$th query were according to $\pi'$, $i_j$ is calculated correctly, and if the verifier were to pick $j$ in Step~5, she would notice the inconsistency.

    This shows that the verifier would reject with probability at least $(1-s)/p(n)$.
\end{proof}

\begin{corollary}\label{cor:MIP}
    For any oracle $\cO$, if $L\in \mathsf{adPCP}^{\cO}_{s,1}[p(n), q(n),r(n)]$, then for any $t\in \mathbb{N}$, $L$ has a $2t$-round $\MIP^{\cO}$ protocol
    with
    \begin{itemize}
        \item perfect completeness,
        \item soundness $(1-(1-s)/p(n))^t$,
        \item $(r(n) + \log(p(n)))\cdot t$ randomness,
        \item $O(r(n) +p(n))\cdot t$ communication, 
        \item $q(n)\cdot t$ queries to $\cO$,
        \item where the verifier running time is $\poly(n, p(n),q(n),r(n))\cdot t$.
    \end{itemize}
    In particular, picking $t = O(p(n)/(1-s))$ gives soundness at most $1/2$.
\end{corollary}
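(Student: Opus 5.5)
The plan is to run the two-round protocol of Lemma~\ref{lemma:adPCP_to_2_round_MIP} sequentially $t$ times, drawing fresh, independent randomness in each repetition. In each repetition the verifier picks a fresh $R^{(\ell)}$ and sends it to Prover~1. It simulates $A$ on Prover~1's answers, forwarding oracle queries to $\cO$, picks a fresh index $j^{(\ell)}$, and sends $i^{(\ell)}_{j^{(\ell)}}$ to Prover~2. The verifier accepts only if all $t$ repetitions accept. The resource bounds follow immediately by multiplying the per-repetition bounds of Lemma~\ref{lemma:adPCP_to_2_round_MIP} by $t$: randomness $(r(n)+\log p(n))t$, communication $O(r(n)+p(n))t$, oracle queries $q(n)t$, running time $\poly(n,p(n),q(n),r(n))\cdot t$, and $2t$ rounds. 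Perfect completeness is also immediate: honest provers answer according to the accepting proof $\pi$ in every repetition, so every repetition accepts with probability $1$.

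Soundness is the substantive part, because in a sequential protocol the provers may adapt their answers to the history of earlier questions. I would handle this by conditioning. Fix arbitrary deterministic strategies for the two provers, and fix any transcript of the first $\ell-1$ repetitions. Conditioned on that history, Prover~2's answers in repetition $\ell$ define a single function $\pi'_\ell$ from locations to bits. This works because Prover~2's answer can depend only on its own view: the history together with the current location. It cannot depend on $R^{(\ell)}$, which Prover~2 never sees, and the non-communication assumption guarantees this. Likewise, Prover~1's answer in repetition $\ell$ is a function of the history and $R^{(\ell)}$ alone.

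The argument from the proof of Lemma~\ref{lemma:adPCP_to_2_round_MIP} then applies verbatim to the conditional game. The fresh $R^{(\ell)}$ is good for $\pi'_\ell$ with probability at least $1-s$. For a good $R^{(\ell)}$, either Prover~1's answers agree with $\pi'_\ell$, in which case the simulated $A$ rejects, or there is a first disagreement at some index $j$, which the fresh $j^{(\ell)}$ hits with probability at least $1/p(n)$. So each repetition accepts with conditional probability at most $1-(1-s)/p(n)$, and multiplying these conditional bounds gives $(1-(1-s)/p(n))^t$. The main subtlety I expect is exactly this step: checking that Prover~2's strategy in repetition $\ell$ is a fixed proof independent of the current $R^{(\ell)}$ and $j^{(\ell)}$. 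It holds because those values are sampled fresh and enter Prover~2's view only through the single location $i^{(\ell)}_{j^{(\ell)}}$.

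Finally, choose $t=\lceil p(n)/(1-s)\rceil$. Then $(1-(1-s)/p(n))^t\le e^{-1}<1/2$, which gives the stated soundness bound.
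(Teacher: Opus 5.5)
Your proposal is correct and follows the paper's proof, which simply repeats the protocol of Lemma~\ref{lemma:adPCP_to_2_round_MIP} sequentially $t$ times with fresh randomness. Your conditioning argument supplies the justification the paper leaves implicit for the multiplicative soundness bound $(1-(1-s)/p(n))^t$: once the earlier transcript is fixed, each prover's round-$\ell$ answers are fixed functions of that round's fresh question, so the one-shot analysis applies conditionally.
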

\begin{proof}
    Repeat the protocol from \Cref{lemma:adPCP_to_2_round_MIP} $t$ times sequentially, using fresh randomness for each instantiation.
\end{proof}

\begin{proof}[Proof of \Cref{thm:MIP_for_BQP}]
By Theorem~\ref{thm:pcp-for-bqp}, we get that for any $L\in \BQP^{\cO}$, it holds that $L \in \mathsf{adPCP}_{1/2,1}[p(n), q(n), r(n)]$, where $p(n),q(n),r(n) = \poly(n)$.
But then, by \Cref{cor:MIP}, for $t = O(p(n))$, we have an $\MIP^{\cO}$ protocol for $L$, where the verifier running time is $\poly(n, p(n),q(n),r(n))\cdot t = \poly(n)$.
\end{proof}

\section*{Acknowledgements}

A large portion of this research was done while AN and AV were visiting the Simons Institute at UC Berkeley for the 2025 Summer Cluster on Quantum Computing. We thank the workshop organizers and the Simons Institute for the opportunity and their generous hospitality. SA acknowledges support from the DOE grant DE-SC0025615. AN acknowledges support from the National Science Foundation CAREER Award CCF-2339948. 
AT acknowledges support from the National Science Foundation CAREER Award CCF-2145474. AV acknowledges support from the National Science Foundation Graduate Research Fellowship 2141064. 
We used Claude (Opus 4.6 Extended) and ChatGPT (Extended Thinking) to assist with preparing figures and understanding prior work.

\appendix

\section{Quantum Circuit for Forrelation}\label{app:forr}

We recall the standard single-layer forrelation circuit and its intermediate
states. Throughout, we let $H$ denote the Hadamard gate on a single qubit and
write $H^{\otimes n}$ for the $n$-fold tensor product. We denote the identity operator by $I$.

\begin{enumerate}
  \item \textbf{Initialization.}
  Start in the all-zero state
  \[
    \ket{\phi_0} := \ket{0^n}.
  \]

  \item \textbf{Create a uniform superposition.}
  Apply $H^{\otimes n}$, qubit-by-qubit:
  \begin{align}
        \ket{\phi_1} &= (H_1\otimes I^{\otimes n-1})\ket{\phi_0},\\
        \ket{\phi_2} &= (I\otimes H_2\otimes I^{\otimes n-2})\ket{\phi_1},\\
        &\vdots \quad \vdots \nonumber\\
        \ket{\phi_{n}} &= I^{\otimes n-1} \otimes H_n \ket{\phi_{n-1}} \\
        &= \frac{1}{\sqrt{N}} \sum_{x \in \{0,1\}^n} \ket{x}
  \end{align}

  \item \textbf{Query $\cO_f$.}
  \begin{align}
    \ket{\phi_{n+1}}
    &:= \cO_f \ket{\phi_n} \\
    &= \frac{1}{\sqrt{N}} \sum_{x \in \{0,1\}^n} f(x)\ket{x}.
  \end{align}

  \item \textbf{Apply Hadamards again, qubit-by-qubit.}

  \begin{align}
        \ket{\phi_{n+2}} &= (H_1\otimes I^{\otimes n-1})\ket{\phi_{n+1}},\\
        \ket{\phi_{n+3}} &= (I\otimes H_2\otimes I^{\otimes n-2})\ket{\phi_{n+2}},\\
        &\vdots \quad \vdots \nonumber\\
        \ket{\phi_{2n+1}} &= I^{\otimes n-1} \otimes H_n \ket{\phi_{2n}} \\
        &= \frac{1}{N} \sum_{x,y\in\{0,1\}^n}
       f(x)(-1)^{x\cdot y}\ket{y}.
  \end{align}

  \item \textbf{Query $\cO_g$.}
  \begin{align}
    \ket{\phi_{2n+2}}
    &:= \cO_g \ket{\phi_{2n+1}} \\
    &= \frac{1}{N} \sum_{x,y\in\{0,1\}^n}
       f(x)(-1)^{x\cdot y}g(y)\ket{y}.
  \end{align}

  \item \textbf{Apply Hadamards a final time, qubit-by-qubit.}

  \begin{align}
        \ket{\phi_{2n+3}} &= (H_1\otimes I^{\otimes n-1})\ket{\phi_{2n+2}},\\
        \ket{\phi_{2n+4}} &= (I\otimes H_2\otimes I^{\otimes n-2})\ket{\phi_{2n+3}},\\
        &\vdots \quad \vdots \nonumber\\
        \ket{\phi_{3n+2}} &= I^{\otimes n-1} \otimes H_n \ket{\phi_{3n+1}} \\
        &= \frac{1}{N^{3/2}}
       \sum_{x,y,z\in\{0,1\}^n}
       f(x)(-1)^{x\cdot y}g(y)(-1)^{y\cdot z}\ket{z}.
  \end{align}
\end{enumerate}
The amplitude of $\ket{0^n}$ in the final state $\ket{\phi_{3n+2}}$ is
\begin{align}
  \ip{0^n}{\phi_{3n+2}}
  &= \frac{1}{N^{3/2}}
     \sum_{x,y} f(x)(-1)^{x\cdot y}g(y),
\end{align}
which is exactly $\ip{f}{\hat{g}}$ as in \Cref{eq:forr-phi}.
Therefore the probability of measuring $\ket{0^n}$ is
\begin{equation}
  p(f_1, f_2) := \Pr[0^n]
  = \abs{\ip{0^n}{\phi_{3n+2}}}^2
  = \frac{1}{N^3}
    \left(
      \sum_{x,y} f(x)(-1)^{x\cdot y}g(y)
    \right)^2.
\end{equation}
Thus estimating $p(f,g)$ (up to inverse-polynomial accuracy) is equivalent to
estimating $\ip{f}{\hat{g}}$ and hence to solving the Forrelation problem in Problem~\ref{pr:forrelation}.

\addcontentsline{toc}{section}{References}
\bibliographystyle{myhalpha}
\bibliography{quantum.bib}

\end{document}

%% file: figures/forrelation-circuit.tex
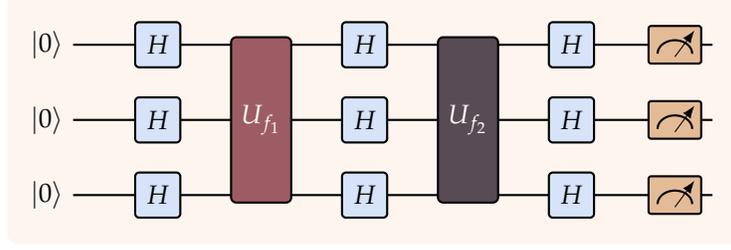
\begin{figure}[h]
\centering
\definecolor{bgcol}{HTML}{FEF5EF}
\definecolor{hadcol}{HTML}{D6E3F8}
\definecolor{ufcol}{HTML}{9D5C63}
\definecolor{ugcol}{HTML}{584B53}
\definecolor{meascol}{HTML}{E4BB97}
\begin{tikzpicture}[
    x=1.25cm,y=1cm,thick,>=Latex,
    had/.style={draw, minimum width=6mm, minimum height=6mm, fill=hadcol, rounded corners=1.5pt},
    meas/.style={draw, minimum width=7mm, minimum height=5mm, fill=meascol, rounded corners=1pt},
    oracle/.style={draw, minimum width=8mm, minimum height=2.2cm, rounded corners=2pt}
]
\fill[bgcol, rounded corners=4pt] (-0.7,-0.65) rectangle (7.1,2.65);
\foreach \y in {2,1,0}{
  \draw (0,\y) -- (6.8,\y);
  \node[left] at (0,\y) {$\ket{0}$};
}
\foreach \y in {2,1,0}{
  \node[had] at (0.9,\y) {$H$};
}
\node[oracle, fill=ufcol] at (2.0,1) {\textcolor{white}{$U_{f_1}$}};
\foreach \y in {2,1,0}{
  \node[had] at (3.1,\y) {$H$};
}
\node[oracle, fill=ugcol] at (4.2,1) {\textcolor{white}{$U_{f_2}$}};
\foreach \y in {2,1,0}{
  \node[had] at (5.3,\y) {$H$};
}
\foreach \y in {2,1,0}{
  \node[meas] (m\y) at (6.4,\y) {};
  \draw ([xshift=-2.3mm, yshift=-1.2mm]m\y.center)
    arc[start angle=180, end angle=0, x radius=2.3mm, y radius=1.8mm];
  \draw[->, thin] ([yshift=-1.6mm]m\y.center) -- ([xshift=2.6mm,yshift=1.9mm]m\y.center);
}
\end{tikzpicture}
\caption{The quantum circuit for 2-fold $\Forr$.}
\label{fig:two-fold-forrelation}
\end{figure}

%% file: figures/forrelation-circuit-sequential.tex
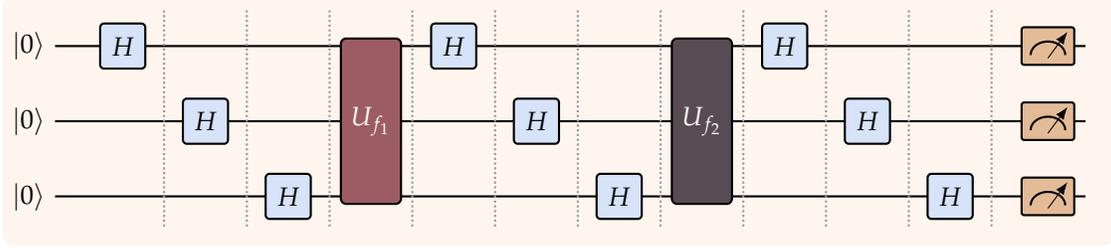
\begin{figure}[h]
\centering
\definecolor{bgcol}{HTML}{FEF5EF}
\definecolor{hadcol}{HTML}{D6E3F8}
\definecolor{ufcol}{HTML}{9D5C63}
\definecolor{ugcol}{HTML}{584B53}
\definecolor{meascol}{HTML}{E4BB97}
\begin{tikzpicture}[
    x=1.0cm,y=1cm,thick,>=Latex,
    had/.style={draw, minimum width=6mm, minimum height=6mm, fill=hadcol, rounded corners=1.5pt},
    meas/.style={draw, minimum width=7mm, minimum height=5mm, fill=meascol, rounded corners=1pt},
    oracle/.style={draw, minimum width=8mm, minimum height=2.2cm, rounded corners=2pt}
]

\fill[bgcol, rounded corners=4pt] (-0.7,-0.65) rectangle (14.1,2.65);

\foreach \y in {2,1,0}{
  \draw (0,\y) -- (13.7,\y);
  \node[left] at (0,\y) {$\ket{0}$};
}

\node[had] at (0.9,2) {$H$};
\draw[densely dotted, black!40] (1.45,-0.4) -- (1.45,2.5);

\node[had] at (2.0,1) {$H$};
\draw[densely dotted, black!40] (2.55,-0.4) -- (2.55,2.5);

\node[had] at (3.1,0) {$H$};
\draw[densely dotted, black!40] (3.65,-0.4) -- (3.65,2.5);

\node[oracle, fill=ufcol] at (4.2,1) {\textcolor{white}{$U_{f_1}$}};
\draw[densely dotted, black!40] (4.75,-0.4) -- (4.75,2.5);

\node[had] at (5.3,2) {$H$};
\draw[densely dotted, black!40] (5.85,-0.4) -- (5.85,2.5);

\node[had] at (6.4,1) {$H$};
\draw[densely dotted, black!40] (6.95,-0.4) -- (6.95,2.5);

\node[had] at (7.5,0) {$H$};
\draw[densely dotted, black!40] (8.05,-0.4) -- (8.05,2.5);

\node[oracle, fill=ugcol] at (8.6,1) {\textcolor{white}{$U_{f_2}$}};
\draw[densely dotted, black!40] (9.15,-0.4) -- (9.15,2.5);

\node[had] at (9.7,2) {$H$};
\draw[densely dotted, black!40] (10.25,-0.4) -- (10.25,2.5);

\node[had] at (10.8,1) {$H$};
\draw[densely dotted, black!40] (11.35,-0.4) -- (11.35,2.5);

\node[had] at (11.9,0) {$H$};
\draw[densely dotted, black!40] (12.45,-0.4) -- (12.45,2.5);

\foreach \y in {2,1,0}{
  \node[meas] (m\y) at (13.2,\y) {};
  \draw ([xshift=-2.3mm, yshift=-1.2mm]m\y.center)
    arc[start angle=180, end angle=0, x radius=2.3mm, y radius=1.8mm];
  \draw[->, thin] ([yshift=-1.6mm]m\y.center) -- ([xshift=2.6mm,yshift=1.9mm]m\y.center);
}

\end{tikzpicture}
\caption{The verification circuit for 2-fold $\Forr$ used in our protocol. Each layer of the original circuit is expanded sequentially to verify the circuit gate-by-gate.}
\label{fig:two-fold-forrelation-sequential}
\end{figure}

%% file: figures/prefix-tree2.tex
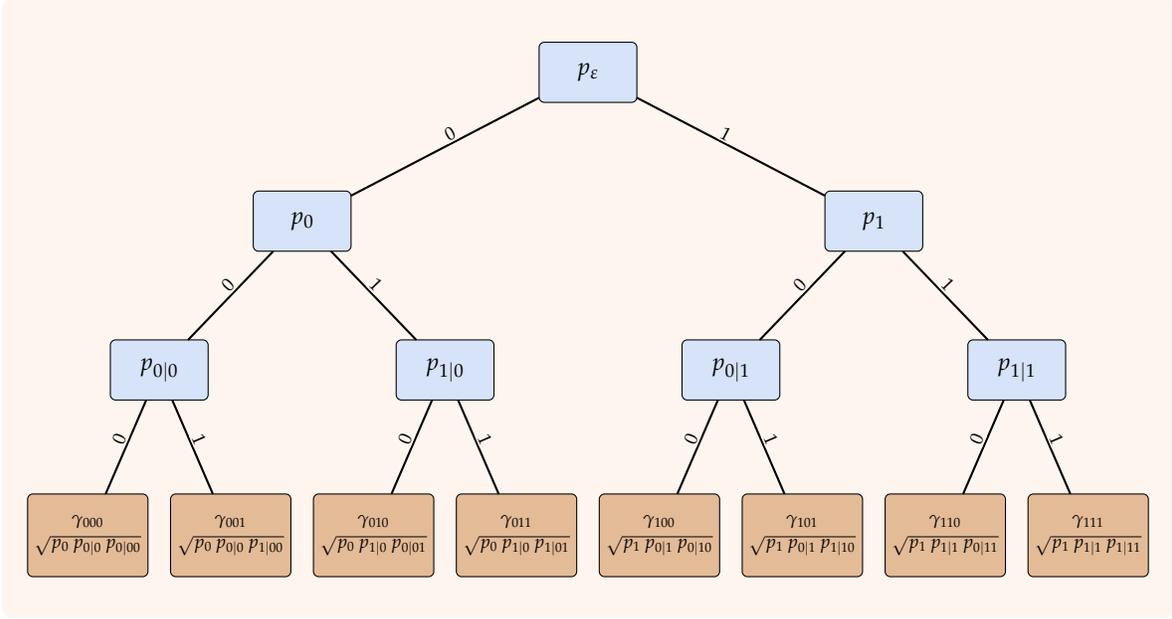
\begin{figure}[t]
\centering
\definecolor{bgcol}{HTML}{FEF5EF}
\definecolor{intcol}{HTML}{D6E3F8}
\definecolor{leafcol}{HTML}{E4BB97}
\begin{tikzpicture}[
  x=0.95cm,
  y=1.10cm,
  line cap=round,
  line join=round,
  >=Latex,
  every node/.style={font=\small},
  betanode/.style={
    draw,
    rounded corners=2pt,
    fill=intcol,
    minimum width=13mm,
    minimum height=8mm,
    inner sep=1.5pt,
    align=center
  },
  leafnode/.style={
    draw,
    rounded corners=2pt,
    fill=leafcol,
    minimum width=16mm,
    minimum height=11mm,
    inner sep=2pt,
    align=center,
    font=\scriptsize
  },
  bitlab/.style={
    font=\scriptsize,
    fill=bgcol,
    inner sep=0.7pt
  }
]
\fill[bgcol, rounded corners=4pt] (-8.2,-6.6) rectangle (8.2,0.9);
\node[betanode] (r) at (0,0) {$p_{\varepsilon}$};
\node[betanode] (n0) at (-4.0,-1.8) {$p_{0}$};
\node[betanode] (n1) at ( 4.0,-1.8) {$p_{1}$};
\node[betanode] (n00) at (-6.0,-3.6) {$p_{0|0}$};
\node[betanode] (n01) at (-2.0,-3.6) {$p_{1|0}$};
\node[betanode] (n10) at ( 2.0,-3.6) {$p_{0|1}$};
\node[betanode] (n11) at ( 6.0,-3.6) {$p_{1|1}$};
\node[leafnode] (l000) at (-7.0,-5.6) {$\gamma_{000}$\\[1pt]
$\sqrt{p_{0}\, p_{0|0}\, p_{0|00}}$};
\node[leafnode] (l001) at (-5.0,-5.6) {$\gamma_{001}$\\[1pt]
$\sqrt{p_{0}\, p_{0|0}\, p_{1|00}}$};
\node[leafnode] (l010) at (-3.0,-5.6) {$\gamma_{010}$\\[1pt]
$\sqrt{p_{0}\, p_{1|0}\, p_{0|01}}$};
\node[leafnode] (l011) at (-1.0,-5.6) {$\gamma_{011}$\\[1pt]
$\sqrt{p_{0}\, p_{1|0}\, p_{1|01}}$};
\node[leafnode] (l100) at ( 1.0,-5.6) {$\gamma_{100}$\\[1pt]
$\sqrt{p_{1}\, p_{0|1}\, p_{0|10}}$};
\node[leafnode] (l101) at ( 3.0,-5.6) {$\gamma_{101}$\\[1pt]
$\sqrt{p_{1}\, p_{0|1}\, p_{1|10}}$};
\node[leafnode] (l110) at ( 5.0,-5.6) {$\gamma_{110}$\\[1pt]
$\sqrt{p_{1}\, p_{1|1}\, p_{0|11}}$};
\node[leafnode] (l111) at ( 7.0,-5.6) {$\gamma_{111}$\\[1pt]
$\sqrt{p_{1}\, p_{1|1}\, p_{1|11}}$};
\draw[thick] (r) -- node[pos=.45, above, sloped, bitlab] {$0$} (n0);
\draw[thick] (r) -- node[pos=.45, above, sloped, bitlab] {$1$} (n1);
\draw[thick] (n0) -- node[pos=.45, above, sloped, bitlab] {$0$} (n00);
\draw[thick] (n0) -- node[pos=.45, above, sloped, bitlab] {$1$} (n01);
\draw[thick] (n1) -- node[pos=.45, above, sloped, bitlab] {$0$} (n10);
\draw[thick] (n1) -- node[pos=.45, above, sloped, bitlab] {$1$} (n11);
\draw[thick] (n00) -- node[pos=.45, above, sloped, bitlab] {$0$} (l000);
\draw[thick] (n00) -- node[pos=.45, above, sloped, bitlab] {$1$} (l001);
\draw[thick] (n01) -- node[pos=.45, above, sloped, bitlab] {$0$} (l010);
\draw[thick] (n01) -- node[pos=.45, above, sloped, bitlab] {$1$} (l011);
\draw[thick] (n10) -- node[pos=.45, above, sloped, bitlab] {$0$} (l100);
\draw[thick] (n10) -- node[pos=.45, above, sloped, bitlab] {$1$} (l101);
\draw[thick] (n11) -- node[pos=.45, above, sloped, bitlab] {$0$} (l110);
\draw[thick] (n11) -- node[pos=.45, above, sloped, bitlab] {$1$} (l111);
\end{tikzpicture}
\caption{A 3-qubit prefix tree. Each non-root node stores a conditional probability $p_{x_i|x_{<i}}$, and each leaf stores $\gamma_x$ times the product of conditional probabilities along its root-to-leaf path.}
\label{fig:aaronson-prefix-tree2}
\end{figure}

%% file: figures/mip-colored.tex
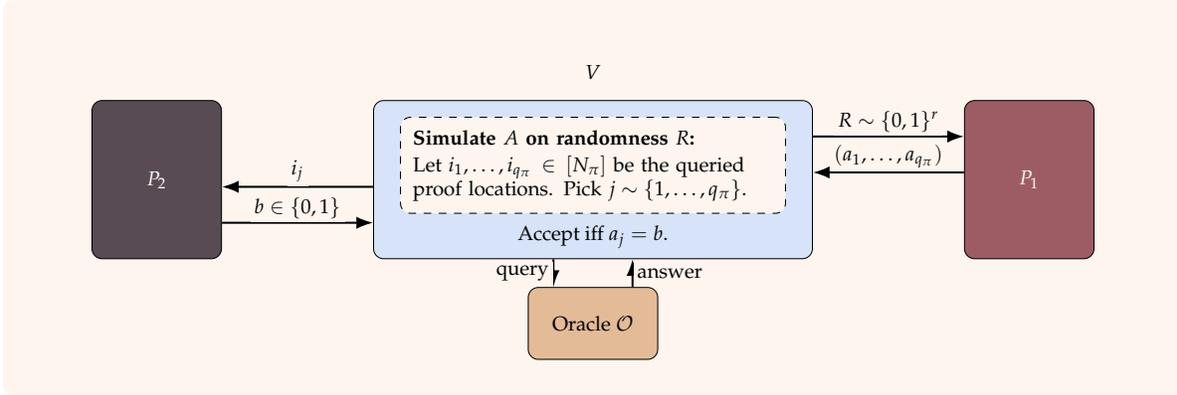
\begin{figure}[h]
\centering
\resizebox{0.95\linewidth}{!}{%
\begin{tikzpicture}[
    >=Latex,
    font=\footnotesize,
    party1/.style={
        draw, rounded corners,
        minimum width=1.8cm, minimum height=2.2cm,
        align=center, fill=ufcol
    },
    party2/.style={
        draw, rounded corners,
        minimum width=1.8cm, minimum height=2.2cm,
        align=center, fill=ugcol
    },
    oracle/.style={
        draw, rounded corners,
        minimum width=1.8cm, minimum height=1.0cm,
        align=center, fill=meascol
    },
    verifier/.style={
        draw, rounded corners,
        minimum width=6.1cm, minimum height=2.2cm,
        fill=hadcol
    },
    innerbox/.style={
        draw, dashed, rounded corners,
        align=left, text width=5cm,
        inner sep=5pt,
        fill=bgcol
    },
    msg/.style={-Latex, thick},
    lab/.style={fill=bgcol, inner sep=1.5pt, align=center}
]

\definecolor{bgcol}{HTML}{FEF5EF}
\definecolor{hadcol}{HTML}{D6E3F8}
\definecolor{ufcol}{HTML}{9D5C63}
\definecolor{ugcol}{HTML}{584B53}
\definecolor{meascol}{HTML}{E4BB97}

\fill[bgcol, rounded corners=4pt] (-8.2,-3.0) rectangle (8.2,2.5);

\node[verifier] (V) at (0,0) {};
\node[party1, right=2.1cm of V] (P1) {\textcolor{white}{$P_1$}};
\node[party2, left=2.1cm of V] (P2) {\textcolor{white}{$P_2$}};
\node[oracle] at (0,-2.0) (O) {Oracle $\cO$};

\node at ([yshift=1.5cm]V.center) {$V$};

\node[innerbox] at ([yshift=0.2cm]V.center) {%
\textbf{Simulate $A$ on randomness $R$:}\\[1pt]
Let $i_1,\ldots,i_{q_\pi}\in[N_\pi]$ be the queried proof locations. 
Pick $j\sim\{1,\ldots,q_{\pi}\}$.};

\node[align=center] at ([yshift=-0.8cm]V.center) {Accept iff $a_j=b$.};

\draw[msg] ([yshift=0.6cm]V.east) -- ([yshift=0.6cm]P1.west)
    node[midway, above, lab] {$R\sim\{0,1\}^{r}$};

\draw[msg] ([yshift=0.1cm]P1.west) -- ([yshift=0.1cm]V.east)
    node[midway, above, lab] {$(a_1,\ldots,a_{q_\pi})$};

\draw[msg] ([yshift=-0.1cm]V.west) -- ([yshift=-0.1cm]P2.east)
    node[midway, above, lab] {$i_j$};

\draw[msg] ([yshift=-0.6cm]P2.east) -- ([yshift=-0.6cm]V.west)
    node[midway, above, lab] {$b\in\{0,1\}$};

\draw[msg] ([xshift=-0.55cm]V.south) -- ([xshift=-0.55cm]O.north)
    node[midway, left, lab] {query};

\draw[msg] ([xshift=0.55cm]O.north) -- ([xshift=0.55cm]V.south)
    node[midway, right, lab] {answer};

\end{tikzpicture}%
}
\caption{The two-prover protocol induced by the adaptive PCP verifier $A$.}
\label{fig:mip-colored}
\end{figure}